\documentclass[12pt]{article}
\parindent 0cm
\parskip 0.2cm
\hoffset -0.6in
\voffset -0.5in
\textwidth 6in
\textheight 9in

\usepackage{amsmath, amsfonts, url, amsthm}
\usepackage[T1]{fontenc}
\usepackage[latin1]{inputenc}
\usepackage{amssymb}

\begin{document}

%%%%%%%%%%%%%%%%%%%%%%%%%%%%%%%%%%%%%%%%%%%%%%%%%%%%%%%%%%%%%%%%%%%%%%%%%%%%%%%%%%%%%%%%%%%%%%%%%%%%%%%%%%%%%
%% Environment definitions
%%%%%%%%%%%%%%%%%%%%%%%%%%%%%%%%%%%%%%%%%%%%%%%%%%%%%%%%%%%%%%%%%%%%%%%%%%%%%%%%%%%%%%%%%%%%%%%%%%%%%%%%%%%%%

\newtheorem*{theorem*}{Theorem}
\newtheorem{theorem}{Theorem}
\newtheorem{lemma}[theorem]{Lemma}
\newtheorem{claim}[theorem]{Claim}
\newtheorem{corollary}[theorem]{Corollary}
\newtheorem{conjecture}[theorem]{Conjecture}
\newtheorem{question}[theorem]{Question}
\newtheorem{problem}[theorem]{Problem}
\newtheorem{proposition}[theorem]{Proposition}
\newtheorem{axiom}[theorem]{Axiom}
\newtheorem{remark}[theorem]{Remark}
\newtheorem{example}[theorem]{Example}
\newtheorem{fact}[theorem]{Fact}
\newtheorem{exercise}[theorem]{Exercise}
\newtheorem{definition}[theorem]{Definition}
\newtheorem{observation}[theorem]{Observation}

%\def\pproof{\par\penalty-1000\vskip .5 pt\noindent{\bf Proof\/ }}
%\newcommand{\QED}{\hfill$\;\;\;\rule[0.1mm]{2mm}{2mm}$}

%\newenvironment{proof}{\begin{pproof}}{\QED\end{pproof}~\\}

%%%%%%%%%%%%%%%%%%%%%%%%%%%%%%%%%%%%%%%%%%%%%%%%%%%%%%%%%%%%%%%%%%%%%%%%%%%%%%%%%%%%%%%%%%%%%%%%%%%%%%%%%%%%%
%% Macro definitions
%%%%%%%%%%%%%%%%%%%%%%%%%%%%%%%%%%%%%%%%%%%%%%%%%%%%%%%%%%%%%%%%%%%%%%%%%%%%%%%%%%%%%%%%%%%%%%%%%%%%%%%%%%%%%

\newcommand{\NChooseM}[2]{\ensuremath{\lp{(}\begin{array}{cc}#1\\#2\end{array}\rp{)}}}
%% The natural numbers
\newcommand{\N}{\ensuremath{\mathbb{N}}}
%% The real numbers
\newcommand{\R}{\ensuremath{\mathbb{R}}}
\newcommand{\Z}{\ensuremath{\mathbb{Z}}}
\newcommand{\F}{\ensuremath{\mathbb{F}}}
\newcommand{\Ex}{\ensuremath{\mathbb{E}}}
\newcommand{\lp}[1]{\left #1}
\newcommand{\rp}[1]{\right #1}
% zero one matrix
\newcommand{\ZO}{\ensuremath{(0,1)}}
% pm 1 set
\newcommand{\pms}{\ensuremath{\{\pm 1\}}}

% Sign set
\def\mp{\{-1,+1\}}
\def\01{\{0,1\}}
\def\TF{\{T,F\}}

% bra and ket
\newcommand{\braket}[2]{\langle #1, #2 \rangle}

\newcommand{\bra}[1]{\langle#1|}
\newcommand{\ket}[1]{|#1\rangle}

% Correlation
\newcommand{\Cor}{\mathrm{Corr}}

% Rank
\newcommand{\rank}{\mathrm{rank}}
\newcommand{\rk}{\mathrm{rank}}
% nonnegative rank
\newcommand{\nnrk}{\rk^+}
% Trace norm
\newcommand{\trn}[1]{\ensuremath{\| #1 \|_{tr}}}
% Frobenius norm
\newcommand{\fro}[1]{\ensuremath{\| #1 \|_{F}}}
% Trace
\newcommand{\Tr}{\mathrm{Tr}}
% Spectral norm
\newcommand{\sn}[1]{\ensuremath{\| #1 \|}}
% the cut norm
\newcommand{\recn}{\ensuremath{\mu}}
% the nuclear norm
\newcommand{\nucn}{\ensuremath{\nu}}
% Size
\newcommand{\size}{\mathrm{size}}
% Discrepancy
\newcommand{\disc}{\mathrm{disc}}
% Deterministic communication complexity
\newcommand{\D}{D}

% Problems
\newcommand{\DISJ}{\mathrm{DISJ}}
\newcommand{\UDISJ}{\mathrm{UDISJ}}
\newcommand{\OR}{\mathrm{OR}}

% Family of basic sets
\newcommand{\basicsets}{\mathcal{S}}
\newcommand{\basicset}{S}

% General norm
\newcommand{\norm}{\Phi}

% communication norm
\newcommand{\comnorm}{\mu_{\basicsets}}

% Approximating error norm
\newcommand{\aenorm}{\xi}

% Floor and ceiling
\newcommand{\floor}[1]{\lfloor #1 \rfloor}
\newcommand{\ceil}[1]{\left\lceil #1 \right\rceil}

% Circuit complexity classes
\newcommand{\ACC}{\mathrm{ACC}}
\newcommand{\AC}{\mathrm{AC}}

% Number on the forehead
\newcommand{\NOF}{number-on-the-forehead }

% Number in hand
\newcommand{\NIH}{number-in-hand }

%Optimum
\newcommand{\opt}{\mathrm{opt}}

% Emphasize
\newcommand{\empha}[1]{{\em #1}}

% The convex hull of sign matrix with communication complexity at most ?
\newcommand{\detconv}[1]{\mathcal{B}(D,#1)}

% Abstract complexity measures
\newcommand{\CC}{\mathrm{CC}}
\newcommand{\G}{\mathcal{G}}
\newcommand{\gcm}{\mathcal{G}}

% ignore
\newcommand{\ignore}[1]{}

\title{Nondeterministic Communication Complexity with Help and Graph Functions}

\author{Adi Shraibman \\ The School of Computer Science \\  
The Academic College of Tel Aviv-Yaffo\\{\tt adish@mta.ac.il}}

\date{}

\maketitle

% output
\newcommand{\dom}{Y}
% element in output
\newcommand{\color}{y}
% size of output
\newcommand{\domdim}{m}
% simple to the index of 1 
\newcommand{\indexf}{\ensuremath{\sigma}}
% large domain to simple 
\newcommand{\simple}{\large^{-1}}

\newcommand{\helps}{h}
\newcommand{\helpsF}{\mathcal{H}}

\renewcommand{\dh}{D^{h}}
\newcommand{\dhp}{D^{h,*}}
\newcommand{\nh}{N^{h}}
\newcommand{\nhp}{N^{h,*}}

\abstract{We define nondeterministic communication complexity in the model of communication complexity
with help of Babai, Hayes and Kimmel \cite{BHK01}. We use it to
prove logarithmic lower bounds on the NOF communication complexity of explicit graph functions,
which are complementary to the bounds proved by Beame, David, Pitassi and Woelfel \cite{BDPW07}.}

\section{Introduction}

The Number On the Forehead model (NOF) in communication complexity
presents some of the more interesting and more challenging open questions
in communication complexity. In this model $k \ge 2$ players are each given
an input $x_i \in X_i$ ($i=1,\ldots,k$), and they require to compute a function
$f: X_1 \times X_2 \times \cdots \times X_k \to \{0,1\}$. Initially every player sees
all the inputs except their own. The players then communicate by taking turns in
writing one bit (0 or 1) on a blackboard. The communication ends when all the players
know the value of $f(x_1,\ldots,x_k)$. The {\em cost} of a protocol is the maximal number
of bits the players write on the blackboard during the computation of $f(x_1,\ldots,x_k)$, over
all choices of inputs $(x_1,\ldots,x_k)$. The {\em deterministic communication complexity}
of $f$, denoted $D_k(f)$, is equal to the minimal cost of a protocol for $f$.

We also denote by 
$N^1_k(f)$ the nondeterministic communication
complexity of $f$ in the $k$-players NOF model.
Nondeterministic protocols are more powerful than
deterministic ones. In addition to the input that is distributed between
the players in the NOF fashion, in nondeterministic communication 
complexity the players also have access to a number of bits given by an all powerful prover.
On input $(x_1,\ldots,x_k)$ such that
$A(x_1,\ldots,x_k)=1$ a correct protocol is required to have
at least one nondeterministic choice (proof) for which the output 
of the protocol is $1$.
If $A(x_1,\ldots,x_k)=0$ then all nondeterministic choices must
lead to the output $0$.

In randomized communication complexity the players are allowed to use random bits.
The inputs are distributed as in the deterministic model, and the players communicate the same way by writing on a blackboard.
The next bit of each player is dependent on the part of the input that he sees, previous
communication, and the random bits. At the end of the communication the players deduce
the output from the communication transcript. Note that the output is now a random variable. 
It is required that the players deduce the correct
value of $f(x_1,\ldots,x_k)$ with probability at least $2/3$ for every input
$(x_1,\ldots,x_k)$. 

A fundamental problem in multiparty communication complexity, as in many computational models, is to study the power
of randomization. Beame, David, Pitassi and Woelfel \cite{BDPW07} showed a non-constructive separation between
randomized and nondeterministic NOF communication complexity. In fact they showed this gap in a very simple
family of functions they called {\em graph functions}.
A function $f:[n]^{k-1} \times [N] \to \{0,1\}$ \footnote{We assume here that the input space is $[n]^{k-1} \times [N]$. This is just for simplicity of 
presentation, and all the definitions and results hold for a general input space.}
is a graph function if for every $(x_1,\ldots,x_{k-1})$ there is a unique $y \in N$ such that
$f(x_1,\ldots,x_{k-1}, y) = 1$.

An advantage of graph functions, observed in \cite{BDPW07} is that the randomized communication complexity 
of any graph function is $O(1)$.
Thus, in order to separate randomized from nondeterministic communication 
complexity it is enough to prove a large lower bound on the nondeterministic communication complexity 
of any graph function. Beame et al \cite{BDPW07} used an elegant counting
argument to prove that most graph functions $f: [n]^{k-1} \times [N] \to \{0,1\}$ with $N \cong \sqrt{\frac{n}{k}}$
have nondeterministic communication complexity $\Omega(\log \frac{n}{k})$. 
It remains a challenging problem though to 
present an explicit function exhibiting a large gap, even for $k=3$. 

Another nice aspect of graph functions is that they can be alternatively viewed as a $(k-1)$-dimensional object.
A graph function $f: [n]^{k-1} \times [N] \to \{0,1\}$ is associated with the function $A: [n]^{k-1} \to [N]$
defined by $A(x_1,\ldots,x_{k-1}) = y$ for the unique $y \in [N]$ satisfying $f(x_1,\ldots,x_{k-1},y)=1$.
We let $A = Base(f)$ denote this base function, and also write $f = Lift(A)$.
\footnote{In \cite{BDPW07} a different notation is used, they write $g$ instead of $A$
and $f=graph^g$. We use the notation $f=Lift(A)$ since we consider other lift options in Section~\ref{sec:other-lifts}.}

It is particularly hard to prove lower bounds for high-dimensional permutations 
and linjections \cite{hdp17} which are a special
type of graph functions. For these functions $N \ge n$ while the results of 
\cite{BDPW07} as well as ours apply only when $N \ll n$.
For illustration, two-dimensional permutations are the class of functions $f:[n]^3 \to \{0,1\}$
for which $Base(f)$ is a Latin square. Improving the known lower bounds for permutations
(even two-dimensional) imply strong applications even beyond the scope of communication complexity. 

The highest lower bound for the communication complexity of an explicit graph function
$f : [n]^{k-1} \times [N] \to \{0,1\}$ is $\Omega(\log \log n)$ proved in \cite{BDPW07}.
For permutations the best lower bound is $\Omega(\log \log \log n)$
for $k=3$ proved in \cite{hdp17}, 
and also in \cite{beigel2006multiparty} for Exact-T functions which are a
special type of permutations. These bounds are also closely related to 
the results of \cite{graham2006monochromatic} and to Proposition~4.3 in \cite{alon2012nearly}.
For $k > 3$ the best lower bound for the communication 
complexity of any permutation is $\Omega(log^* n)$ \cite{hdp17}.

The aim of this manuscript is to present another approach 
for proving lower bounds on the deterministic and nondeterministic communication complexity of graph functions.
In particular we give an alternative proof to the 
$\Omega(\log \log n)$ bound of \cite{BDPW07}. The bounds of 
\cite{BDPW07}, both constructive and non-constructive, use an observation
that a nondeterministic protocol for a graph function can always be put in a special
{\em normal form}. Namely, a graph function always has a very simple type of protocol
in which one of the players is oblivious
and the others send only one bit. This protocol was also previously used
for a specific graph function by Chandra, Furst and Lipton in \cite{CFL83}
and for a more general family of permutations in \cite{beigel2006multiparty}.

One way to tackle the problem of proving lower bounds on the deterministic and nondeterministic communication complexity of 
explicit graph functions,  is to consider a relaxation of the model.
The above mentioned one-way protocol for graph functions suggests to use the model of 
{\em communication complexity with help} defined by Babai, Hayes and Kimmel \cite{BHK01}.
In this model $k$ players wish to evaluate a function $A : [n]^{k} \to [N]$. 
A deterministic communication protocol with help is similar to the NOF protocol
described earlier, with the addition of a ``helper''.   
Before the players start
the communication on inputs $(x_1,\ldots,x_k)$, the helper sends them a help string
of at most $b$ bits, which can depend on any part of the input.
The cost of a protocol is the maximal, over all inputs, of the length
of the communication transcript, plus the length of the help string. The deterministic communication complexity with
$b$ help bits, denoted by $\dh_{k,b}(A)$, is the minimal cost of such a protocol for $A$.

Note that communication complexity with help is different than nondeterministic communication complexity
in that the players in this model do not need to verify the information given by the helper, this is simply free information.
Obviously, the helper can simply announce the value of the function with $\log N$ bits of information. Thus
the interesting question is how much communication is needed when the helper gives less than $\log N$ bits. 

Babai et al \cite{BHK01} used communication complexity with help in order to prove lower bounds 
on the one-way communication complexity of explicit functions. To prove the lower bounds 
they have defined a concept of multicolor discrepancy and used it as a lower bound
for $\dh_{k,b}(A)$. They then computed the multicolor discrepancy of certain functions, thus providing
lower bounds for the deterministic communication complexity with help of these functions.
We exploit these bounds and translate them also to lower bounds on
the deterministic complexity of explicit graph functions.

Let $f: [n]^{k-1} \times [N] \to \{0,1\}$ be a graph function, and let $A=Base(f)$. 
It is easy to check that $D_k(f) \le \dh_{k-1, b}(A) + 1$, for any natural number $b$. This upper bound
holds even for the one-way model, where the last player sends a message and
then the players communicate as usual but without the participation of the last player. 
Indeed the one-way NOF communication complexity model with $k$ players is stronger than the model with help
and $k-1$ players, since on input $(x_1,\ldots,x_k)$ the first $(k-1)$ players see the value on the forehead
of the $k$-th player, 
which is essentailly $A(x_1,\ldots,x_{k-1})$, and need only validate it.  
In communication complexity with help on the other hand, the $k$-th player is replaced by the helper, and the rest of
the players are required to compute $A(x_1,\ldots,x_{k-1})$. 
 
Our first result is that for graph functions, the gap between these two models cannot be arbitrary though, 
which makes this relaxation useful. 
\begin{theorem*}
\label{th:gf_dcch}
Let $f: [n]^{k-1} \times [N] \to \{0,1\}$ be a graph function and let $A = Base(f)$. Then
$$
D_k(f) \ge \min \{\dh_{k-1, b}(A) - (k-1)N, b\}.
$$
\end{theorem*}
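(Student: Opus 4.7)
The plan is to transform any deterministic NOF protocol $P$ for $f$ of cost $c = D_k(f)$ into a help protocol for $A$ with $k-1$ players whose total cost is at most $c + (k-1)N$, under the proviso $c \le b$. The theorem then splits into two cases: if $c \ge b$, the bound $D_k(f) \ge \min\{\dh_{k-1,b}(A) - (k-1)N,\, b\}$ is trivial from $D_k(f) = c \ge b$; if $c < b$, the conversion yields $\dh_{k-1,b}(A) \le c + (k-1)N$, equivalently $D_k(f) \ge \dh_{k-1,b}(A) - (k-1)N$.

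For the construction, given inputs $(x_1, \ldots, x_{k-1})$ with $y^* = A(x_1, \ldots, x_{k-1})$, the helper (who knows everything) computes the transcript $\tau^* = P(x_1, \ldots, x_{k-1}, y^*)$ of $P$ on the unique ``1''-input and transmits $\tau^*$ verbatim as the help string; this fits since $|\tau^*| \le c \le b$. Because $f$ is a graph function and $P$ correctly computes $f$, the value $y^*$ is the \emph{unique} $y \in [N]$ with $P(x_1, \ldots, x_{k-1}, y) = \tau^*$: any such $y$ makes $P$ output $1$, forcing $f(x_1, \ldots, x_{k-1}, y) = 1$ and hence $y = y^*$. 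So the $k-1$ players need only pick out this unique $y$.

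The key observation enabling cheap identification is that in $P$, player $k$'s messages depend only on $(x_1, \ldots, x_{k-1})$ and the prior transcript (as their forehead carries $y$). Hence, at any step where the running transcript already agrees with $\tau^*$'s prefix, any bit subsequently output by player $k$ automatically coincides with $\tau^*$'s next bit, independently of $y$. Verifying $P(x_1, \ldots, x_{k-1}, y) = \tau^*$ therefore reduces to checking only the bits contributed by players $1, \ldots, k-1$. Player $i$ in the help protocol sees $\{x_j : j \ne i\}$, which together with a candidate $y$ is exactly player $i$'s view in $P$; using the broadcast $\tau^*$ they can locally simulate each of their own turns and compare against $\tau^*$'s bits. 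The communication phase then consists of $N$ rounds: for each $y \in [N]$, every player $i$ broadcasts a single bit of agreement, for a total of $(k-1)N$ bits; the unique $y$ that receives $k-1$ acceptances is $y^*$, by a short induction on transcript length that uses the automatic matching of player $k$'s bits. The main subtlety is precisely this automatic-matching property: it is what makes the verification cost only $(k-1)N$ rather than the naive $N \cdot c$ obtained by $N$ independent replays of $P$.
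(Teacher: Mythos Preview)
Your proof is correct and follows essentially the same approach as the paper's: use the transcript of an optimal protocol $P$ on the true input $(x_1,\ldots,x_{k-1},y^*)$ as the help string, then iterate over all $y\in[N]$ with each of the $k-1$ players broadcasting one consistency bit, exploiting that player~$k$'s bits in $P$ are independent of the last coordinate. Your write-up is somewhat more explicit than the paper's about why player~$k$'s bits automatically match (and hence why the cost is $(k-1)N$ rather than $N\cdot c$), but the argument is the same.
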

The above lower bound, combined with the mentioned results of \cite{BHK01}, can give a lower bound of
$\Omega(\log \log n)$ on the deterministic communication complexity of explicit
graph functions, matching the bound of \cite{BDPW07}.
\begin{theorem*}
There is an explicit graph function 
$f : [n]^{k-1} \times [N] \to \{0,1\}$ such that 
\[
D_k(f) \ge \log N \ge \Omega(\log \log n - k).
\]
\end{theorem*}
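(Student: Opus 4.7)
The plan is to combine the first theorem with the explicit lower bounds on deterministic communication complexity with help established by Babai, Hayes and Kimmel \cite{BHK01}. The first theorem already reduces the task: for any explicit base function $A: [n]^{k-1} \to [N]$ and any integer $b$,
$$D_k(\mathrm{Lift}(A)) \geq \min\{\dh_{k-1,b}(A) - (k-1)N, b\},$$
so it suffices to supply an explicit $A$ together with a choice of $b$ and $N$ under which the right-hand side is at least $\log N$, with $\log N \geq \Omega(\log\log n - k)$.

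First, I would take $A$ to be one of the explicit constructions analyzed in \cite{BHK01} (such as a function derived from the generalized inner product), for which the multicolor-discrepancy machinery yields a quantitative lower bound $\dh_{k-1,b}(A) \geq L(n,k,N,b)$ with an explicit dependence on the parameters. Plugging this into the first theorem turns the whole problem into a parameter-balancing exercise.

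Second, I would set $b$ roughly equal to the target $\log N$, so that the second branch of the minimum lands at $\log N$, and then choose $N$ as large as possible subject to $L(n,k,N,b) \geq (k-1)N + \log N$, so that the first branch matches. Since the BHK estimate scales polynomially in $\log n$ while decaying with $k$ and with the number of colors $N$, the balance point turns out to be $N$ of order $(\log n)/2^{O(k)}$, giving $\log N = \Omega(\log\log n - k)$, exactly the target rate.

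The main obstacle is this quantitative balancing: the slack term $(k-1)N$ appearing in the first theorem competes directly with the BHK lower bound, which itself weakens as one increases either $N$ (finer color partition) or $b$ (more help). One must select an explicit $A$ whose multicolor-discrepancy estimate is tight enough at the balance point to absorb the $(k-1)N$ overhead and still leave a surplus of at least $\log N$. Once the right $A$ and parameters are fixed, the first theorem yields $D_k(f) \geq \log N$ with the claimed rate, and explicitness of $f = \mathrm{Lift}(A)$ is immediate from explicitness of $A$.
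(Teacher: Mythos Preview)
Your proposal is correct and follows essentially the same route as the paper: invoke the first theorem to reduce to a lower bound on $\dh_{k-1,b}(A)$, supply an explicit $A$ from \cite{BHK01} with small multicolor discrepancy, and then balance $b\approx\log N$ against the $(k-1)N$ loss so that both branches of the minimum land at $\log N$. The paper carries this out with the specific function $A=T_{N,d,k}$ (trace of a product of $d\times d$ matrices over $\F_N$) rather than a GIP-type function, and chooses $d$ so that the discrepancy bound from \cite{BHK01} yields $\dh_{k-1,b}(A)\ge kN$; the resulting parameter computation gives exactly $\log\log n=\log N+k+O(\log k)$, matching your predicted balance point $N\sim(\log n)/2^{O(k)}$.
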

Even though the above bound is tight, it is only applicable when 
$N$ is at most $c\log n$ for some constant $c < 1$. The reason for this limitation 
is that the protocol that gives the lower bound iterates
over all values $y \in [N]$ in order to find the unique value for which
$f(x_1,\ldots,x_{k-1},y)=1$. A natural approach to break this barrier is to define 
and use nondeterministic communication complexity with help, which we do in Section~\ref{sec:nondet_help}. The bound
$N^1_k(f) \le \nh_{k-1,b}(A)+1$ still naturally holds, and on the other hand we prove the following lower bound.
\begin{theorem*}
For every graph function $f: [n]^{k-1} \times [N] \to \{0,1\}$ it
holds that
\[
N^1_k(f) \ge \min \{\nh_{k-1,b}(A) -\log N - k +1, b\},
\]
where $A=Base(f)$.
\end{theorem*}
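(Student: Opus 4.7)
The bound splits into the two terms of the minimum, so the easy case is $N^1_k(f) \ge b$, where the inequality is immediate. I will focus on the case $c := N^1_k(f) < b$ and build a nondeterministic $(k-1)$-player help protocol for $A$ with at most $b$ help bits and total cost $c + \log N + k - 1$; this yields $\nh_{k-1,b}(A) \le c + \log N + k - 1$, which is the remaining case.

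Fix a nondeterministic $k$-player NOF protocol $P$ for $f$ of cost $c$. For each $(x_1,\ldots,x_{k-1})$ I set $y = A(x_1,\ldots,x_{k-1})$ and fix, in some canonical way, a nondeterministic witness $w^*$ that makes $P$ accept on $(x_1,\ldots,x_{k-1},y)$; let $T$ denote the resulting blackboard transcript, of total length at most $c$ (including $w^*$ and all player messages). The help protocol for $A$ then has the helper send $T$ as the help string, the nondeterministic prover write a candidate value $y' \in [N]$ on the blackboard ($\log N$ bits), and each player $i \in \{1,\ldots,k-1\}$ write one bit declaring whether every message attributed to player $i$ in $T$ matches what $P$ prescribes given the view ``all foreheads other than $x_i$, together with $y'$'' and the preceding transcript. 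The protocol outputs $y'$ exactly when all $k-1$ verification bits are accept, yielding help $|T| \le c \le b$ plus communication $\log N + (k-1)$ for a total of $c + \log N + k - 1$.

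Completeness is immediate on the honest hint $y' = y$, since $T$ was generated from exactly the views the players now simulate, so every verification bit is accept. The main obstacle will be soundness, that is, ruling out an adversarial $y' \ne y$ that still passes all $k-1$ checks. The key observation is that player $k$'s view in $P$ is $(x_1,\ldots,x_{k-1})$ and does not involve the last coordinate, so player $k$'s messages in $T$ are determined by the shared transcript history alone. I plan to establish soundness by a short induction on the transcript position showing that, if every player $i < k$ accepts, then $T$ coincides with the exact transcript $P$ would produce on input $(x_1,\ldots,x_{k-1},y')$ with witness $w^*$; since $T$ is accepting this forces $f(x_1,\ldots,x_{k-1},y') = 1$, and the graph-function property then gives $y' = A(x_1,\ldots,x_{k-1}) = y$, contradicting the choice $y' \ne y$. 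Combined with the trivial case $N^1_k(f) \ge b$, the theorem follows.
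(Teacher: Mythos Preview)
Your proof is correct and follows essentially the same approach as the paper: use the accepting transcript of an optimal nondeterministic protocol for $f$ (on the true last coordinate) as the help string, have the prover nondeterministically propose a value $y' \in [N]$, and let each of the $k-1$ players spend one bit to check that their messages in the transcript are consistent with the run on $(x_1,\ldots,x_{k-1},y')$. Your completeness/soundness write-up, including the inductive observation that player $k$'s messages depend only on $(x_1,\ldots,x_{k-1})$ and the prior transcript, is exactly the argument the paper gives (stated there slightly more tersely).
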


Nondeterministic communication complexity with help captures better the communication complexity
of graph functions, and provides a lower bound that allows a much wider range for $N$. 
In fact, a tight lower bound for $\nh_{k-1,b}(A)$ can provide a tight lower bound
for the communication complexity of the corresponding graph function $f: [n]^{k-1} \times [N] \to \{0,1\}$,
as long as $N \ll n$. Recall that currently only exponentially smaller lower bounds are known.
This makes proving lower bounds for this model an interesting question.
The first place to look for such lower bounds, is to rely on the known bounds for the deterministic model.
In the classical two players boolean model it is known that 
$D_2(f) \le O(N^1_2(f)N^1_2(\bar{f}))$ for every function $f: [n]^2 \to \{0,1\}$.
But the proof breaks down for communication complexity with help. Even for regular
protocols, it is not clear whether this bound can be generalized to $k \ge 3$ players in the NOF model.
It is also an interesting and nontrivial question whether multicolor discrepancy provides 
a good lower bound for nondeterministic communication complexity with help. The 
much weaker bound on nondeterministic communication complexity via deterministic 
complexity \cite[Ex. 2.6]{KN97} though, 
can be adapted also to the case of communication complexity with help.
\begin{theorem*}
\label{111}
Let $A: [n]^k \to [N]$ be a function and let $b < \log N$ be a natural number. Let 
$\dh_{k,b}(A) = b + c_d$ where $b$ is the number of help bits
and $c_d$ is the number of subsequent bits of communication, in an optimal communication protocol.
Similarly let $\nh_{k,b}(A) = b + c_n$, then
\footnote{According to the definition, the number of help bits can be smaller than $b$, we thus
need to justify why there is always an optimal protocol with exactly $b$ help bits. We remark on that at the end
of Section~\ref{sec_partial_functions}.} 
\[
c_d \le (k-1)2^{c_n} + c_n.
\] 
\end{theorem*}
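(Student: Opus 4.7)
My strategy is to adapt the classical simulation of a deterministic protocol by a nondeterministic one to the NOF-with-help setting, generalising also to non-Boolean outputs. Starting from an optimal nondeterministic protocol with help for $A$ of cost $b + c_n$, I will build a deterministic protocol reusing the same helper but paying $(k-1) 2^{c_n} + c_n$ bits of further communication.

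The key structural fact: once the help string $h$ is fixed, the remainder of the nondeterministic protocol is a standard NOF nondeterministic protocol on the fibre $\{x : h(x) = h\}$, using $c_n$ bits of prover-plus-verifier communication. Each transcript $c \in \{0,1\}^{c_n}$ is labelled with a claimed output $y(h,c) \in [N]$, and the standard NOF analysis shows that its accepting set is a cylinder intersection $C_{h,c} = \bigcap_{i=1}^{k} C_{h,c,i}$ with $C_{h,c,i}$ depending only on $x_{-i}$. Soundness gives $A(x) = y(h,c)$ for every $x \in C_{h,c}$, and completeness ensures that every $x$ with $h(x) = h$ lies in at least one $C_{h,c}$.

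The deterministic simulation then runs in two phases after the helper sends $h(x)$. First, for each of the $2^{c_n}$ certificates $c$ in a fixed order, players $1, \ldots, k-1$ each announce the bit $\mathbf{1}[x_{-i} \in C_{h,c,i}]$, which they can compute from their view, costing $(k-1) 2^{c_n}$ bits in total. Second, player $k$, who can check $\mathbf{1}[x_{-k} \in C_{h,c,k}]$ for every $c$ from their own view, identifies a certificate $c^{*}$ with $x \in C_{h,c^{*}}$ (guaranteed to exist by completeness) and writes its index using $c_n$ bits. All players then output $y(h, c^{*}) = A(x)$. The only delicate point is the cylinder-intersection structure of the accepting sets, which follows from the fact that each player's verification decision depends only on their forehead view, the certificate, and $h$; the footnote in the statement handles the minor normalisation of the help length to exactly $b$ bits in both models.
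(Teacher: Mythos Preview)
Your proof is correct and follows essentially the same approach as the paper. The paper factors the argument through an auxiliary lemma on partial functions, showing $D_k(A,S) \le (k-1)2^{N_k(A,S)}+N_k(A,S)$ for any promise set $S$, and then applies it to each block $S_i$ of the optimal help partition; you instead work directly with the help string $h$ and its fibre, but the underlying protocol is identical in both cases: players $1,\ldots,k-1$ each broadcast a length-$2^{c_n}$ membership vector for the cylinders of the nondeterministic cover, and player $k$ then names the index of a covering cylinder intersection using $c_n$ bits. One small imprecision: your soundness clause ``$A(x)=y(h,c)$ for every $x\in C_{h,c}$'' should really be restricted to $x$ in the fibre of $h$, since the cylinder intersection may extend beyond it; this does not affect the argument, as in your simulation the helper always sends the true $h(x)$.
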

This yields:
\begin{theorem*}
Let $f: [n]^{k-1} \times [N] \to \{0,1\}$ be a graph function, let $A = Base(f)$, and let $b = \log N - 1$. 
Then
\[
N^1_k(f) \ge \min \left\{ \log \left(\dh_{k-1,b}(A) - \log N \right) - \log k - k, \log N \right\}.
\]
\end{theorem*}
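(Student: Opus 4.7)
The plan is to chain the two previously stated theorems: use the nondeterministic-complexity-with-help lower bound on $N^1_k(f)$, then translate the $\nh$ term into a $\dh$ term via Theorem~\ref{111}.

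First I would instantiate the previous theorem (the one just above Theorem~\ref{111}) with $b = \log N - 1$, obtaining
$$N^1_k(f) \;\ge\; \min\{\nh_{k-1,b}(A) - \log N - k + 1,\ \log N - 1\}.$$
So the task reduces to lower bounding $\nh_{k-1,b}(A)$ in terms of $\dh_{k-1,b}(A)$.

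Next I would apply Theorem~\ref{111} to the base function $A:[n]^{k-1} \to [N]$, viewed as a $(k-1)$-player function (the proof of that theorem is player-count agnostic, so the statement holds verbatim with $k$ replaced by $k-1$, and the chosen $b=\log N - 1$ satisfies the required $b < \log N$). Writing $c_d = \dh_{k-1,b}(A) - b$ and $c_n = \nh_{k-1,b}(A) - b$, Theorem~\ref{111} gives $c_d \le (k-2)2^{c_n} + c_n$. I would split into two cases: if $c_n = 0$, the bound forces $c_d \le k-2$, and the conclusion becomes vacuous (the first argument of the outer min is negative or tiny); otherwise $c_n \ge 1$, so $c_n \le 2^{c_n}$, and the inequality simplifies to $c_d \le (k-1)2^{c_n}$. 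Taking logarithms yields
$$\nh_{k-1,b}(A) \;=\; b + c_n \;\ge\; b + \log\bigl(\dh_{k-1,b}(A) - b\bigr) - \log(k-1).$$

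Finally I would substitute this estimate into the outer lower bound. With $b = \log N - 1$, the first argument of the min becomes
$$(\log N - 1) + \log\bigl(\dh_{k-1,b}(A) - \log N + 1\bigr) - \log(k-1) - \log N - k + 1,$$
and after cancelling the $\log N$ terms and absorbing the small additive $1$'s and the difference between $\log(k-1)$ and $\log k$ into the additive $-k$ and $-\log k$, this is at least $\log\bigl(\dh_{k-1,b}(A) - \log N\bigr) - \log k - k$. The second argument $b = \log N - 1$ is also at least $\log N$ up to a single bit absorbed in the same way, yielding exactly the stated bound.

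There is no real obstacle here beyond careful bookkeeping of off-by-one constants and a quick justification that Theorem~\ref{111} transfers from $k$-player to $(k-1)$-player functions; the only mildly delicate point is the case split around $c_n = 0$, which however only arises in a regime where the conclusion is trivially true.
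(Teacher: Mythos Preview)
Your proposal is correct and follows essentially the same two-step chaining as the paper: apply Theorem~\ref{th:bound_help_reg} with $b=\log N-1$, then invert Theorem~\ref{111} to lower bound $c_n$ by roughly $\log c_d$, and substitute. The paper does exactly this, writing $c_d \le (k-1)2^{c_n}+c_n \le k\,2^{c_n}$ and hence $c_n \ge \log c_d - \log k$, then plugging into $N^1_k(f) \ge \min\{c_n-k,\log N\}$.

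Two small remarks. First, your case split on $c_n=0$ is unnecessary: $c_n \le 2^{c_n}$ holds for all $c_n \ge 0$, so one can pass directly from $(k-2)2^{c_n}+c_n$ to $(k-1)2^{c_n}$ (or, as the paper does with the cruder constant, to $k\,2^{c_n}$). Second, your worry about the $\log N-1$ versus $\log N$ in the second argument of the $\min$ is legitimate, but the paper itself makes the identical silent replacement when invoking Theorem~\ref{th:bound_help_reg}; this single additive bit is simply being ignored in both arguments.
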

Together with the discrepancy lower bound of \cite{BHK01}, the above inequality implies that
$N^1_k(f) \ge \Omega(\log \log n)$ for any graph function $f: [n]^{k-1} \times [N] \to \{0,1\}$ with a base function 
that has multicolor discrepancy smaller than $\frac{1}{N^{1+\Omega(1)}}$, similarly to the bounds of \cite{BDPW07}. 
The techniques of \cite{BDPW07} are different though and the underlying statement is complementary
to ours. They prove that an optimal protocol requires $\Omega(\log \log n)$ help bits when the discrepancy is smaller than 
$\frac{1}{N^{1+\Omega(1)}}$. Our bound on the other hand says that regardless of the number of help bits,
even if $\log N -1$ help bits are given, the subsequent communication between the players
has complexity $\Omega(\log \log n)$. 
We describe previous results in more detail in Section~\ref{sec:prev_res}.

Finally, in Section~\ref{sec:other-lifts} we briefly consider alternative Lift options for 
$A: [n]^{k} \to [N]$, other than the corresponding graph function.

\section{Graph functions and deterministic communication complexity with help}

We first prove the following lower bound and then apply it to give lower bounds
on explicit graph functions.
\begin{theorem}
\label{th:bound_help_reg_det}
Let $f: [n]^{k-1} \times [N] \to \{0,1\}$ be a graph function and let $A = Base(f)$. Then
$$
D_k(f) \ge \min \{\dh_{k-1, b}(A) - (k-1)N, b\}.
$$
\end{theorem}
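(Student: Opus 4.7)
My plan is to prove the contrapositive-style implication: assume $D_k(f) = c$ and $b > c$ (the case $b \le c$ makes the minimum at most $b \le c = D_k(f)$, which is trivial), and then build a $(k-1)$-player protocol for $A$ with help that uses at most $c$ help bits (well within the budget $b$) and at most $(k-1)N$ subsequent communication bits. This will give $\dh_{k-1,b}(A) \le c + (k-1)N = D_k(f)+(k-1)N$, as required.

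Fix an optimal $k$-player NOF protocol $P$ for $f$ of cost $c$. On input $(x_1,\ldots,x_{k-1})$ in the help model, let $y^* = A(x_1,\ldots,x_{k-1})$, so that $f(x_1,\ldots,x_{k-1},y^*)=1$ and $f(x_1,\ldots,x_{k-1},y)=0$ for all other $y \in [N]$. The helper, who sees the entire input, can compute the full transcript $\tau$ of $P$ on $(x_1,\ldots,x_{k-1},y^*)$; this transcript is accepting and has length at most $c$. The helper sends $\tau$ as the help string. Next, each player $i \in \{1,\ldots,k-1\}$ considers their own NOF view in the $k$-player protocol, which on input $(x_1,\ldots,x_{k-1},y)$ would be everyone else's input together with the unknown $y$. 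Using only what they see (all $x_j$ for $j \neq i$) plus $\tau$, they compute the set
\[
Y_i = \{\, y \in [N] : \text{player } i\text{'s strategy on view with }y\text{ produces exactly the bits attributed to player }i\text{ in }\tau \,\}.
\]
Each player then broadcasts the indicator vector of $Y_i$, costing $N$ bits per player, so $(k-1)N$ bits total. From these broadcasts every player computes $\bigcap_{i<k} Y_i$.

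The heart of the proof is the claim $\bigcap_{i<k} Y_i = \{y^*\}$. Containment of $y^*$ is immediate since $\tau$ is the genuine transcript for input $(x_1,\ldots,x_{k-1},y^*)$. For the other direction, suppose $y' \in \bigcap_{i<k} Y_i$. Then for each $i<k$, player $i$'s bits in $\tau$ coincide with their strategy's outputs on view $(x_1,\ldots,x_{i-1},x_{i+1},\ldots,x_{k-1},y')$. Crucially, player $k$'s view $(x_1,\ldots,x_{k-1})$ does not contain $y$ at all, so given any fixed transcript prefix, player $k$'s next bit is independent of whether we are considering $y^*$ or $y'$. Running $P$ on $(x_1,\ldots,x_{k-1},y')$ therefore reproduces $\tau$ bit by bit, and $P$ accepts. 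Since $f$ is a graph function, the accepting $y$ is unique, forcing $y' = y^*$.

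Combining: the protocol correctly outputs $y^* = A(x_1,\ldots,x_{k-1})$, uses at most $c \le b$ help bits, and $(k-1)N$ further communication bits, for total cost $c + (k-1)N$. The case analysis on $b$ closes the argument. The only subtle point I anticipate is justifying that player $k$'s messages in $P$ really do not depend on $y$ once the transcript prefix is fixed, which is exactly the NOF view structure and is what makes the intersection collapse to a single value; every other step is bookkeeping.
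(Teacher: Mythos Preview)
Your proof is correct and is essentially the same as the paper's: both use the transcript of an optimal protocol $P$ on $(x_1,\ldots,x_{k-1},A(x_1,\ldots,x_{k-1}))$ as the help string, and then have each of the $k-1$ players spend $N$ bits to report, for every $y\in[N]$, whether their own messages under $P$ on $(x_1,\ldots,x_{k-1},y)$ are consistent with that transcript, using the fact that player $k$'s messages are independent of $y$ to conclude uniqueness. The only cosmetic difference is that the paper phrases this as $N$ rounds of $k-1$ bits each, whereas you phrase it as $k-1$ broadcasts of $N$-bit indicator vectors.
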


\begin{proof}
If $D_k(f) > b$ then we are done. Otherwise assume
that $D_k(f) \le b$, we show that in this case
\[
\dh_{k-1, b}(A) \le D_k(f) + (k-1)N.
\]
To prove this lower bound,
let $P$ be an optimal communication protocol for
$f$. We define the following protocol for $A$:
On input $(x_1,\ldots,x_{k-1})$ the players iterate over all values $\color \in [N]$,
and check whether $\color$ is equal to $A(x_1,\ldots,x_{k-1})$. In each iteration, 
the players use as help-bits the transcript $\mathcal{T}$ of
the run of the protocol $P$ on $(x_1,\ldots,x_{k-1},A(x_1,\ldots,x_{k-1}))$.
Each player compares his actions according to the transcript
$\mathcal{T}$ with his actions according to $P$ on $(x_1,\ldots,x_{k-1},\color)$,
and announces whether or not they agree. If for
some player these actions do not agree this must mean that $\color \ne
A(x_1,\ldots,x_k)$.

Otherwise, since the actions of the $k$-th player do not depend on the $k$-th input,
$\mathcal{T}$ is the transcript of a run of $P$ both on $(x_1,\ldots,x_{k-1},\color)$
as well as on the input $(x_1,\ldots,x_{k-1},A(x_1,\ldots,x_{k-1}))$.
Since the transcript of the protocol
determines its output, and $A$ is a graph function,
the players can find this way the unique value $y$ for which $\color = A(x_1,\ldots,x_k)$.

This protocol uses $N$
rounds of communication, one round for each value
$\color \in [N]$. In each round the protocol uses $k-1$ bits of communication.  In addition
the protocol uses $D_k(f)$ help
bits (Recall we have assumed that $D_k(f) \le b$).
Thus, $D^h_{k-1,b}(A) \le (k-1)N+D_k(f)$ as required.
\end{proof}
Babai et al \cite{BHK01} proved a lower bound on distributional communication
complexity with help in terms of multicolor discrepancy.
They also gave explicit functions with low discrepancy.

\paragraph{Multicolor discrepancy}
Let $A: X \to \dom$ be a function. For a subset $S \subset X$ and
an element $\color \in \dom$, define
\[
disc(A,S,\color) = \left| |A^{-1}(\color) \cap S| - |S|/|B|\right| /|X|.
\]
The discrepancy of a set $S$ is
\[
disc(A,S) = \max_{\color \in \dom} disc(A,S,y).
\]
The discrepancy of a set system $\mathcal{F}$ is defined as
\[
disc(A,\mathcal{F}) = \max_{S \in \mathcal{F}} disc(A,S).
\]
In words, discrepancy measures how much the size of
$A^{-1}(\color) \cap S$ deviates from what is
expected from a random function $A$.

We are interested in the case where $X = [n]^{k-1}$, $Y=[N]$
and $\mathcal{F}$ is the family of cylinder intersections. We denote the discrepancy
of a function $A: [n]^{k-1} \to [N]$ simply by $disc_{k-1}(A)$.
The following bound is proved in \cite{BHK01}.
\begin{theorem}[\cite{BHK01}]
\label{th:D_disc}
For every function $A: [n]^{k-1} \to [N]$ 
\footnote{The result of \cite{BHK01} holds for distributional communication complexity with help,
but we only need the deterministic model.}
$$D^h_{k-1,b}(A) \ge \log \left( \frac{1-(2^b/N)}{disc_{k-1}(A)} \right).$$
\end{theorem}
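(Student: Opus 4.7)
The plan is to mimic the standard discrepancy-based lower bound for NOF protocols, but to absorb the helper's partition of the input into the argument in a controlled way. Fix an optimal protocol with $b$ help bits and $c$ subsequent communication bits, so that $D^h_{k-1,b}(A)=b+c$. For each possible help string $h \in \{0,1\}^b$, let $X_h \subseteq [n]^{k-1}$ be the set of inputs on which the helper sends $h$. These sets partition $[n]^{k-1}$, but crucially they need not be cylinder intersections, since the helper sees the entire input.

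Next I would exploit the structure that does survive. Once $h$ is fixed, the subsequent NOF protocol is a standard one, so for each transcript $\tau$ of length at most $c$ the set $C_{h,\tau}$ of inputs producing transcript $\tau$ (when the helper pretends the help string is $h$) is a cylinder intersection, and these $C_{h,\tau}$ partition $[n]^{k-1}$ for fixed $h$. On inputs in $X_h \cap C_{h,\tau}$ the protocol outputs a single value $y_{h,\tau}$, and correctness forces $A(x)=y_{h,\tau}$ there. Hence
\[
|X_h \cap C_{h,\tau}| \;\le\; |A^{-1}(y_{h,\tau}) \cap C_{h,\tau}|.
\]
Now I would invoke the definition of multicolor discrepancy applied to the cylinder intersection $C_{h,\tau}$ and the color $y_{h,\tau}$, which gives
\[
|A^{-1}(y_{h,\tau}) \cap C_{h,\tau}| \;\le\; \frac{|C_{h,\tau}|}{N} + |X|\cdot \mathrm{disc}_{k-1}(A),
\]
where $|X|=n^{k-1}$.

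Summing over all transcripts $\tau$ for a fixed $h$, and using $\sum_\tau |C_{h,\tau}|=|X|$ together with the fact that there are at most $2^c$ transcripts, yields
\[
|X_h| \;\le\; \frac{|X|}{N} + 2^c\, |X|\, \mathrm{disc}_{k-1}(A).
\]
Finally, summing over all $2^b$ help strings and using $\sum_h |X_h| = |X|$ and dividing by $|X|$ gives
\[
1 \;\le\; \frac{2^b}{N} + 2^{b+c}\,\mathrm{disc}_{k-1}(A),
\]
which rearranges to $2^{b+c} \ge (1-2^b/N)/\mathrm{disc}_{k-1}(A)$, and taking logarithms delivers the theorem.

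The main conceptual obstacle is that the helper's partition $\{X_h\}$ does not consist of cylinder intersections, so one cannot simply apply the standard rectangle-partition discrepancy argument to the composite protocol. The resolution, and the step I expect to be the technical heart of the argument, is the observation that correctness already lets us upper bound $|X_h \cap C_{h,\tau}|$ by $|A^{-1}(y_{h,\tau}) \cap C_{h,\tau}|$, so we only ever need to apply the discrepancy inequality to the cylinder intersections $C_{h,\tau}$ that arise inside a fixed help branch, and the $|X|/N$ slack terms then accumulate to exactly the $2^b/N$ correction appearing in the statement.
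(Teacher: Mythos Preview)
The paper does not supply its own proof of this theorem: it is quoted directly from \cite{BHK01} and used as a black box. Your argument is correct and is precisely the standard discrepancy argument adapted to the help model; in particular, the key move you flag---bounding $|X_h \cap C_{h,\tau}|$ by $|A^{-1}(y_{h,\tau}) \cap C_{h,\tau}|$ so that discrepancy is only ever applied to the genuine cylinder intersections $C_{h,\tau}$, with the non-structured helper partition absorbed into the $2^b/N$ correction---is exactly the point of the proof in \cite{BHK01}. There is nothing to compare against in the present paper, and nothing to fix in your write-up.
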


An example of an explicit function with small discrepancy is
\begin{definition}[\cite{BHK01}]
Let $q$ be a prime power, and let $d$ be a positive integer. Let $M_d$ be
the space of $d \times d$ matrices over $\F_q$. The function
$T_{q,d,k}: M_d^k \to \F_q$ is defined by
\[
T_{q,d,k}(B_1,\ldots,B_k) = Tr(B_1\cdot B_2 \cdot \ldots \cdot B_k).
\]
\end{definition}

\begin{lemma}[\cite{BHK01}]
\label{cor:TMP}
$-\log disc_k(T_{q,d,k}) \ge \Omega(\frac{d^2\log q}{k^22^k})$.
\end{lemma}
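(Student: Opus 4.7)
The plan is to bound the discrepancy of $T_{q,d,k}$ using the standard Babai--Nisan--Szegedy style iterated Cauchy--Schwarz argument, generalized from the Boolean to the multicolor (alphabet $\F_q$) setting. I will first convert the multicolor discrepancy into a Fourier/character sum, then iterate Cauchy--Schwarz across the $k$ coordinates to reduce the problem to evaluating a ``cube'' expectation involving only matrix products, and finally bound that cube expectation by counting.

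First, I would reformulate $disc_k(T_{q,d,k})$ in terms of additive characters of $\F_q$. Using orthogonality, the indicator $\mathbb{1}[T_{q,d,k}(B_1,\ldots,B_k)=y]$ expands as $\frac{1}{q}\sum_{\chi} \chi(\Tr(B_1\cdots B_k) - y)$. Substituting this into the definition of $disc(A, S, y)$ for a cylinder intersection $S = S_1 \cap \cdots \cap S_k$ (with $S_i$ independent of the $i$-th coordinate) and isolating the trivial character, the discrepancy is controlled by the maximum, over nontrivial $\chi$ and over cylinder intersections, of
\[
 \Bigl| \Ex_{B_1,\ldots,B_k} \bigl[\chi(\Tr(B_1\cdots B_k)) \prod_{i=1}^k \mathbb{1}_{S_i} \bigr] \Bigr|.
\]

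Second, I would apply iterated Cauchy--Schwarz in the BNS pattern. Squaring and averaging in the $i$-th coordinate $k$ times replaces each $B_i$ by an independent pair $(B_i^0, B_i^1)$, and the cylinder indicators $\mathbb{1}_{S_i}$ drop out because they do not depend on $B_i$. One is left with a $2^k$-fold product of characters indexed by $\varepsilon \in \01^k$, which collapses via linearity of trace and multilinearity of the matrix product to
\[
 \Ex_{\Delta_1,\ldots,\Delta_k} \bigl[ \chi(\Tr(\Delta_1 \Delta_2 \cdots \Delta_k)) \bigr]^{1/2^{k-1}},
\]
where $\Delta_i = B_i^0 - B_i^1$ is uniformly distributed in $M_d$.

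Third, I would evaluate the cube sum. Since $\chi$ is a nontrivial additive character, $\Ex_C[\chi(\Tr(C))] = 0$ for a uniformly random matrix $C \in M_d$, and more generally the expectation equals the probability that $\Delta_1 \cdots \Delta_k$ is the zero matrix. This reduces the problem to estimating $\Pr[\Delta_1 \cdots \Delta_k = 0]$ over uniform independent $\Delta_i \in M_d(\F_q)$. By conditioning on the ranks of successive partial products and using that a uniform random matrix lies in a fixed subspace of codimension $r$ with probability $q^{-rd}$, this probability is of order $q^{-\Omega(d^2/k)}$: the rank can drop by only a bounded amount at each of the $k$ multiplications, and each drop pays a factor $q^{-\Omega(d)}$.

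Taking the $2^{k-1}$-th root from the Cauchy--Schwarz steps and dividing by the factor $1/q$ from the character decomposition yields $disc_k(T_{q,d,k}) \le q^{-\Omega(d^2/(k^2 2^k))}$, which rearranges to the claimed bound. The main obstacle is the rank-drop estimate for a product of random matrices: verifying that $\Pr[\Delta_1\cdots\Delta_k=0]$ is at most $q^{-\Omega(d^2/k)}$ requires a careful case analysis of rank trajectories, and getting the exponent in $d^2$ right (rather than a weaker $d$-exponent) is where the bulk of the work lies.
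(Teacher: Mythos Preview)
The paper does not give a proof of this lemma; it is quoted from \cite{BHK01} and used as a black box. So there is no ``paper's own proof'' to compare against here.

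Your sketch is essentially the argument carried out in \cite{BHK01}: reduce multicolor discrepancy to a character sum via orthogonality, apply the BNS iterated Cauchy--Schwarz to strip off the cylinder indicators, exploit multilinearity of $\Tr(B_1\cdots B_k)$ so that the resulting $2^{k}$-cube sum collapses to a single character evaluated at $\Tr(\Delta_1\cdots\Delta_k)$, and finally bound that by a rank-drop probability for products of uniform random matrices over $\F_q$. Two small remarks. First, after averaging out $\Delta_1$ the inner expectation becomes $\Pr[\Delta_2\cdots\Delta_k=0]$, i.e.\ a product of $k-1$ (not $k$) random matrices vanishes; this is an off-by-one that does not affect the asymptotics. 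Second, your accounting gives an exponent of order $d^2/(k\,2^{k})$, which is at least the $d^2/(k^2 2^{k})$ appearing in the lemma, so the stated bound follows a fortiori; the extra factor of $k$ in \cite{BHK01} comes from a slightly lossier bookkeeping in their rank analysis. Your identification of the rank-drop estimate $\Pr[M_1\cdots M_{k-1}=0]\le q^{-\Omega(d^2/k)}$ as the nontrivial step is accurate: one argues, via the subadditivity of corank under matrix multiplication together with $\Pr[\rk(M)=d-s]\le q^{-s^2}$, that the coranks must sum to at least $d$ and then optimizes via convexity.
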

Combining these facts with Theorem~\ref{th:bound_help_reg_det} gives:
\begin{corollary}
Let $N$ be a prime power, $k$ be an integer, and take 
$d = c \cdot k^{3/2}2^{k/2} \cdot \sqrt{\frac{N}{\log N}}$ for
large enough $c$. Let $A = T_{N,d,k}$ and $f = Lift(A)$, then
\[
D_k(f) \ge \log N \ge \Omega(\log \log n - k),
\]
where the domain of $f$ is $[n]^{k-1} \times [N]$.
\end{corollary}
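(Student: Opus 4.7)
The plan is to combine the discrepancy lower bound of Theorem~\ref{th:D_disc} with the explicit discrepancy estimate of Lemma~\ref{cor:TMP}, feed the outcome into Theorem~\ref{th:bound_help_reg_det}, and finally translate the dimension parameter $d$ into the domain size $n$. Throughout I treat $A$ as the trace product on $k-1$ matrices (matching the arity $k-1$ required by the base function), choosing the trace product parameters to make the discrepancy small enough for the theorem to bite.

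First I would pick the number of help bits $b = \log N - 1$. With this choice, $1 - 2^b/N = 1/2$, so Theorem~\ref{th:D_disc} simplifies to
\[
\dh_{k-1,b}(A) \;\ge\; -\log \disc_{k-1}(A) - 1.
\]
Next I would apply Lemma~\ref{cor:TMP}, which for the trace product gives
\[
-\log \disc_{k-1}(A) \;\ge\; \Omega\!\left(\frac{d^{2}\log N}{(k-1)^{2}\, 2^{\,k-1}}\right).
\]
Substituting $d = c\,k^{3/2}2^{k/2}\sqrt{N/\log N}$ collapses this expression to $\Omega(c^{2}\, k N)$; for $c$ taken large enough the right-hand side exceeds $(k-1)N + \log N + 2$, so
\[
\dh_{k-1,b}(A) - (k-1)N \;\ge\; \log N + 1.
\]

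Plugging this into Theorem~\ref{th:bound_help_reg_det} yields
\[
D_k(f) \;\ge\; \min\{\dh_{k-1,b}(A) - (k-1)N,\, b\} \;\ge\; \min\{\log N + 1, \log N - 1\},
\]
which is $\Omega(\log N)$ and can be tightened to $\log N$ by absorbing the additive constant into the choice of $c$. For the second inequality, I would compute the domain size: the base function lives on $M_d^{k-1}$ with $|M_d| = N^{d^2}$, so $n = N^{d^2}$ and hence $\log n = d^{2}\log N$. Taking one more logarithm and substituting the chosen value of $d$ gives $\log\log n = 2\log d + \log\log N = O(\log N + k)$, which rearranges to $\log N \ge \Omega(\log\log n - k)$.

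The only real obstacle is bookkeeping the constants: one must verify that the $\Omega(kN)$ term coming from the discrepancy bound genuinely dominates the $(k-1)N$ subtracted inside the $\min$, with slack of at least $\log N$ left over, so that the helper's budget $b$ rather than the discrepancy term becomes the binding constraint. Once $c$ is chosen to achieve this slack, everything else is a routine substitution.
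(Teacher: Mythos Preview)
Your argument is essentially identical to the paper's: combine Theorem~\ref{th:D_disc} and Lemma~\ref{cor:TMP} to lower-bound $\dh_{k-1,b}(A)$, substitute the chosen $d$ so that the bound exceeds $(k-1)N$ with room to spare, feed the result into Theorem~\ref{th:bound_help_reg_det}, and then unwind $n$ in terms of $d$. The only quibble is your claim that the bound ``can be tightened to $\log N$ by absorbing the additive constant into the choice of $c$'': since the $\min$ in Theorem~\ref{th:bound_help_reg_det} is capped at $b=\log N-1$, no choice of $c$ can push past that---but the paper is equally casual about this off-by-one (it silently uses $b=\log N$ when invoking Theorem~\ref{th:bound_help_reg_det}), and the discrepancy is immaterial for the asymptotic conclusion.
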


\begin{proof}
By Theorem~\ref{th:D_disc} and Lemma~\ref{cor:TMP}
$$
\dh_{k-1}(A) \ge c_1\frac{d^2\log N}{k^22^k}, 
$$
for some constant $c_1>0$. But  $c_1\frac{d^2\log N}{k^22^k}= c_1c^2 kN$, therefore
if we choose $c=c_1^{-1/2}$ then $c_1c^2 = 1$ and $\dh_{k-1}(A) \ge kN$.

By Theorem~\ref{th:bound_help_reg_det}
$$
D_k(f) \ge \min \{\log N, \dh_{k-1}(A) - (k-1)N\}.
$$
Thus
$$
D_k(f) \ge \min \{\log N, N\} = \log N.
$$

Finally notice that $n  = 2^{d^2 \log N} = 2^{c^2 k^3 2^k N}$, is the size of the first
$k-1$ players input space. Thus
$$
\log \log n = \log N + k + 3\log k +2\log c. 
$$
\end{proof}

\section{Nondeterministic communication complexity with help}
\label{sec:nondet_help}

The protocol in the proof of Theorem~\ref{th:bound_help_reg_det}
iterates over values $\color \in [N]$ in search of the correct value.
This iteration adds an additive factor to the complexity, that is linear in $N$.
It seems natural to consider nondeterministic complexity for such a search problem,
as there is a potential of getting exponentially better lower bounds, and also improving
the dependency on $N$. In this section we define nondeterministic communication
complexity with help and use it to prove lower bounds on the deterministic NOF
communication complexity of graph functions.

We define {\em nondeterministic communication complexity with help} of a function
$A : [n]^{k} \to [N]$ similarly to deterministic communication. The difference is that the communication after 
receiving the help bits is nondeterministic. Namely, on input $(x_1,\ldots,x_k)$, after receiving
the help string, the 
players also receive a proof from an all powerful prover, and are then required to compute
the value of $A(x_1,\ldots,x_k)$. The output of the computation can either be the correct value
or ``don't know''. It is required that for every input there is at least one choice of a nondeterministic 
string for which the protocol outputs the correct answer.
We denote by $\nh_{k,b}(A)$ the 
nondeterministic communication complexity with help of $A$ with $b$ help bits.
We also let $N_k(A) = \nh_{k,0}(A)$, be the nondeterministic communication complexity of $A$.

\subsection{Bounds}

\begin{theorem}
\label{th:bound_help_reg}
Let $f: [n]^{k-1} \times [N] \to \{0,1\}$ be a graph function and let $A = Base(f)$. Then
\[
N^1_k(f) \ge \min \{\nh_{k-1,b}(A) -\log N - k + 1, b\}.
\]
\end{theorem}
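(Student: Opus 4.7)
The plan is to adapt the reduction from Theorem~\ref{th:bound_help_reg_det} to the nondeterministic setting, exploiting the fact that nondeterminism allows us to \emph{guess} the value $y = A(x_1,\ldots,x_{k-1})$ directly instead of enumerating all $N$ possibilities. The argument splits into two cases: if $N^1_k(f) > b$ the bound is immediate, so I assume $N^1_k(f) \le b$ and construct a nondeterministic $(k-1)$-player NOF protocol with help for $A$ of total cost at most $N^1_k(f) + \log N + (k-1)$.

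Given an optimal nondeterministic protocol $P$ for $f$, on input $(x_1,\ldots,x_{k-1})$ the helper computes $y := A(x_1,\ldots,x_{k-1})$ and an accepting transcript $\mathcal{T}$ of $P$ on $(x_1,\ldots,x_{k-1},y)$; here $\mathcal{T}$ bundles both the nondeterministic proof $\pi$ and the blackboard communication, so $|\mathcal{T}| \le N^1_k(f) \le b$. The helper sends $\mathcal{T}$ as the help string. The prover then nondeterministically announces a guess $y^* \in [N]$ using $\log N$ bits. Finally each of the $k-1$ players simulates their own actions in $P$ on input $(x_1,\ldots,x_{k-1},y^*)$ with proof $\pi$, and broadcasts a single bit indicating whether those actions agree with $\mathcal{T}$; this contributes $k-1$ bits of verification communication. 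The reduced protocol outputs $y^*$ if all $k-1$ verification bits are $1$, and ``don't know'' otherwise.

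Correctness mirrors the deterministic argument. On the accepting branch $y^* = y$, the transcript $\mathcal{T}$ is by construction consistent with players' actions, so all verification bits are $1$. Conversely, if all verification bits are $1$, then because player $k$'s actions in $P$ depend only on $(x_1,\ldots,x_{k-1})$ and the proof $\pi$ (and not on the value placed on player $k$'s forehead), $\mathcal{T}$ is an accepting run of $P$ also on $(x_1,\ldots,x_{k-1},y^*)$. Hence $f(x_1,\ldots,x_{k-1},y^*) = 1$, and the graph-function property forces $y^* = A(x_1,\ldots,x_{k-1})$. Summing the three contributions yields $\nh_{k-1,b}(A) \le N^1_k(f) + \log N + (k-1)$, which rearranges to the desired inequality.

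The main thing to be careful about is purely the accounting: ensuring that the help string $\mathcal{T}$ indeed contains both the nondeterministic proof and the communication of $P$ (so that its length equals the cost of the accepting run, at most $N^1_k(f)$), and that the nondeterministic guess of $y^*$ is counted as $\log N$ bits of the communication cost rather than being absorbed into the help. Once these conventions are fixed, the graph-function uniqueness does all the real work, allowing a single nondeterministic guess to replace the $(k-1)N$-bit enumeration of the deterministic reduction.
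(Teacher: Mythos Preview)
Your proof is correct and follows essentially the same approach as the paper: the help string is an accepting transcript $\mathcal{T}$ of $P$ on $(x_1,\ldots,x_{k-1},A(x_1,\ldots,x_{k-1}))$, the nondeterministic guess supplies a candidate $y^*$ at cost $\log N$, and each of the $k-1$ players spends one bit to check consistency, with soundness coming from the fact that player~$k$'s actions are independent of the $k$-th coordinate together with the uniqueness property of graph functions. The only cosmetic difference is that you explicitly separate the proof $\pi$ from the blackboard communication inside $\mathcal{T}$, whereas the paper treats the whole transcript as one object; the accounting and the argument are otherwise identical.
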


The proof is similar to the proof of Theorem~\ref{th:bound_help_reg_det},
excluding the fact that here the deterministic search is replaced with a nondeterministic
choice.
\begin{proof}
If $N^1_k(f) > b$ then the bound follows. Assume therefore that
$N^1_k(f) \le b$. We prove that in this case
\[
\nh_{k-1,b}(A) \le N^1_k(f) + \log N + k -1.
\]
The proof works by defining an efficient communication protocol for
$\nh_{k-1,b}(A)$ based on a protocol for $N^1_k(f)$.
Let $P$ be an optimal nondeterministic communication protocol for
$f$. We define the following protocol for $A$.
On inputs $(x_1,\ldots,x_{k-1})$ the players guess an output $\color \in [N]$, and
then verify whether this is really the output. To verify whether $\color$
is the output, the players use as help-bits a transcript $\mathcal{T}$ of
a run of the protocol $P$ on input $(x_1,\ldots,x_{k-1},A(x_1,\ldots,x_{k-1}))$,
with nondeterministic choices that achieve the correct answer.
Each player compares his actions according to the transcript
$\mathcal{T}$ with his actions according to $P$ on inputs $(x_1,\ldots,x_k,\color)$,
and announces whether or not they agree. If for
some player these actions do not agree this must mean that $\color \ne
A(x_1,\ldots,x_{k-1})$ and thus the protocol outputs ``don`t know''.

Otherwise, since the actions of the $k$-th player do not depend on the $k$-th input,
$\mathcal{T}$ is the transcript of a run of $P$ both on inputs $(x_1,\ldots,x_{k-1},\color)$
and on inputs $(x_1,\ldots,x_{k-1},A(x_1,\ldots,x_{k-1}))$.
By our choice of nondeterministic bits, and since the transcript of the protocol
determines its output, if the protocol accepts it must be that $\color = A(x_1,\ldots,x_{k-1})$.
Note that here we use the fact that
$P$ makes only one-sided mistakes.

Finally, notice that this protocol uses $\log N + (k-1)$
bits of communication, and $N^1_k(f)$ help
bits. We therefore conclude that $\nh_{k-1,b}(A) \le N^1_k(k) + \log N  + k - 1$.
\end{proof}

As in the deterministic and one-way models, it also holds that:
\begin{theorem}
\label{th:bound_help_reg_2}
Let $f: [n]^{k-1}\times [N] \to \{0,1\}$ be a graph function and let $A  =Base(f)$, then
\[
N_k^1(f) \le \nh_{k-1,b}(A)+1.
\]
\end{theorem}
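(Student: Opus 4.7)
The plan is to directly simulate an optimal nondeterministic protocol with help for $A$ inside a nondeterministic NOF protocol for $f$, exploiting the fact that in the $k$-player NOF model the $k$-th player, who carries $y$ on his forehead, sees all of $x_1,\ldots,x_{k-1}$ and can therefore play the role of the omniscient helper from the help model. Write $\nh_{k-1,b}(A) = b + c$, where $b$ is the number of help bits and $c$ is the subsequent nondeterministic communication in an optimal protocol for $A$, and let $P_A$ denote this protocol.

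On input $(x_1,\ldots,x_{k-1},y)$, I would first have player $k$ write onto the blackboard the $b$-bit help string that the helper in $P_A$ sends on input $(x_1,\ldots,x_{k-1})$; this is legal because player $k$ sees all of $x_1,\ldots,x_{k-1}$. The all-powerful prover then supplies the nondeterministic string of $P_A$, and players $1,\ldots,k-1$ execute $P_A$ among themselves using $c$ bits of communication: each of them has all the information needed, because in $P_A$ player $i$ requires $\{x_j : j \ne i,\, j \le k-1\}$, and in our $f$-protocol player $i$ sees this plus $y$, which is simply ignored during this phase. The resulting transcript either encodes an output $y' \in [N]$ or the verdict ``don't know''. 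Finally, player $1$, who sees $y$ on player $k$'s forehead and can recover $y'$ from the transcript, writes a single bit that is $1$ iff $P_A$ did not say ``don't know'' and $y' = y$.

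For correctness, when $y = A(x_1,\ldots,x_{k-1})$ the prover can supply the nondeterministic string that makes $P_A$ output the correct value $A(x_1,\ldots,x_{k-1}) = y$, so player $1$'s final bit is $1$ on at least one nondeterministic choice. When $y \ne A(x_1,\ldots,x_{k-1})$, every nondeterministic choice causes $P_A$ to output either $A(x_1,\ldots,x_{k-1}) \ne y$ or ``don't know'', and in either case player $1$ writes $0$; hence the $f$-protocol never erroneously accepts. The total communication is $b + c + 1 = \nh_{k-1,b}(A) + 1$, as claimed. There is no serious obstacle; the only point that needs a moment's thought is to check that player $k$'s contribution depends only on inputs he can see, which is automatic since his sole task is to emit the help string computed from $(x_1,\ldots,x_{k-1})$.
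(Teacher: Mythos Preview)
Your proof is correct and follows essentially the same approach as the paper: player $k$ simulates the helper by announcing the help string (since he sees $x_1,\ldots,x_{k-1}$), players $1,\ldots,k-1$ run the nondeterministic protocol for $A$, and a final bit compares the resulting value with $y$. The only cosmetic difference is that you spell out the decomposition $\nh_{k-1,b}(A)=b+c$ and verify more explicitly which inputs each player needs access to.
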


\begin{proof}
Given an input $(x_1,\ldots,x_k)$,
the $k$-th player sees all inputs on the other player's foreheads,  $(x_1,\ldots,x_{k-1})$. 
Thus, the last player can compute the help string and send it to the other players.
The first $k-1$ players then use an optimal protocol for $\nh_{k-1,b}(A)$
to compute $A(x_1,\ldots,x_{k-1})$.
If the result of the protocol is ``don`t know'' then the first player outputs $0$.
After the first $k-1$ players compute $A(x_1,\ldots,x_{k-1})$ they compare it with $x_{k}$.
If these quantities are equal the first player outputs $1$, otherwise he outputs $0$. 
This protocol requires $\nh_{k-1,b}(A)+1$ bits of communication. Note that there is no restriction on $b$ here.
\end{proof}

\section{Lower bounds for explicit graph functions}

\newcommand{\inp}[1][x]{\mathbf{#1}}

We show in this section a lower bound on nondeterministic communication complexity with help
in terms of deterministic complexity with help. This bound is a natural extension of the exponential bound
known for the binary case of the two players traditional model \cite[Ex. 2.6]{KN97}. This bound enables to prove lower 
bounds on explicit graph functions for which the base function has relatively low multicolor discrepancy.

\subsection{Partial functions}
\label{sec_partial_functions}

An alternative way to view communication complexity 
with $b$ help bits is that we are allowed to partition the input space into at most $2^b$
parts and compute the complexity of the partial function confined to any of the parts, separately.
The communication complexity with help
is equal to the maximal complexity over these partial problems, plus the logarithm of the size of the partition.

For the formal definition we first recall the definition of the communication complexity
of a partial function. Let $A: [n]^k \to [N]$ be a function and $S \subset [n]^k$ a subset
of the inputs. 
The communication complexity of $A$ restricted to S, denoted $CC(A,S)$,
where $CC$ is any communication complexity model,
is defied similarly to $CC(A)$ with the exception that a protocol only needs
to be correct on inputs that belong to $S$. 

Now, let $A: [n]^k \to [N]$ be a function, and let $b$ be a natural number,  
the communication complexity $\dh_{k,b}(A)$ is equal to
$$
\min_{\mathcal{S}}\left( t + \max_{i=1,\ldots,2^t} D_k(A,S_i) \right),
$$
where the minimum is over all partitions
$\mathcal{S}$ of $[n]^k$ into $2^t$ subsets $\{S_1,S_2,\ldots,S_{2^t}\}$,
with $t \le b$.
The partition $\mathcal{S}$ is defined by the help bits, all inputs in a single
part $S_i$ share the same help string.

The nondeterministic communication complexity $\nh_{k,b}(A)$ is defined similarly as
$$
\min_{\mathcal{S}}  \left(t + \max_{i=1,\ldots,2^t} N_k(A,S_i) \right).
$$
The major difficulty in proving lower bounds on nondeterministic communication complexity with help is that the
rectangles can intersect also outside the subset $S_i$, where there is no restriction on the value 
of the entries. 

\paragraph{The number of help bits} We note that we can assume without loss of generality that the number of help bits
is exactly $b$. That is, the size of the partition is $2^b$. We exhibit that on $\nh_{k-1,b}(A)$, the proof for $\dh_{k-1,b}(A)$ is similar.
\begin{proof}
Let $P$ be an optimal communication protocol for $\nh_{k-1,b}(A)$. Let $P_H$ be the help player's protocol
and $P_C$ the subsequent communication protocol. Namely, on input $(x_1,\ldots,x_{k-1})$ first the help player sends
$P_H(x_1,\ldots,x_{k-1})$ to the players and then the transcript of their communication is given by $P_C(x_1,\ldots,x_{k-1})$.
Let $\nh_{k-1,b}(A) = h + c$ where $h$ is the maximal length of a help string, and $c$ is the maximal length of a transcript of 
$P_C$. Then, if $h < b$, we can change the protocols and add to the
help string the initial $b-h$ communication bits of the transcript given by $P_C$, since the helper knows everything.
Thus, we can assume without loss of generality that $h = b$.
\end{proof}

\subsection{Cylinder intersections}

A key definition in multiparty communication complexity is that of a cylinder intersection. We say that $C \subseteq X_1 \times \cdots \times X_k$
is a {\em cylinder in the $i$-th coordinate} if membership in $C$ does not depend on the $i$-th coordinate.
Namely, for every $x,x' \in X_i$ there holds
$(a_1,\ldots,a_{i-1},x,a_{i+1},\ldots,a_k)\in C$ iff $(a_1,\ldots,a_{i-1},x',a_{i+1},\ldots,a_k)\in C$.
A cylinder intersection is a set $C$ of the form $C = \cap_{i=1}^k C_i$
where $C_i$ is a cylinder in the $i$-th coordinate.

Following are some well known basic facts regarding the relation between cylinder intersections 
and communication complexity:
\begin{lemma}[\cite{KN97}]
\label{cylinder_intersection_intersection}
\label{cylinder_intersection_membership}
There holds
\begin{enumerate}

\item Let $C=\cap_{i=1}^k C_i$ be a cylinder intersection in $X_1 \times \cdots \times X_k$ and let
$\inp \in X_1 \times \cdots \times X_k$. Then $\inp \in C$
if and only if $\inp\in C_i $ for all $i \in [k]$.

\item The above fact gives a one round protocol to determine membership 
in a cylinder intersection $C = \cap_{i=1}^k C_i$. Given an input $\inp \in X_1 \times \cdots \times X_k$
player $i$ checks whether $\inp \in C_i$, and transmits $1$
if it is true and $0$ otherwise. It holds that $\inp \in C$,
if and only if all players transmitted $1$.

\item 
Let $A:[n]^k \to [N]$ be a function, and let $S \subset [n]^k$ be a subset of the entries.
An optimal protocol for $N_k(A,S)$ induces a cover of $[n]^k$ 
by at most $2^{N_k(A,S)}$ cylinder intersections that are monochromatic with respect
to $A$ on $S$. 

\end{enumerate}
\end{lemma}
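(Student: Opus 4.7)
My plan is to handle the three items in order; all three are standard and follow by short arguments.

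Item 1 is immediate from the set-theoretic definition: $\inp \in \bigcap_{i=1}^k C_i$ iff $\inp \in C_i$ for every $i \in [k]$. For item 2, I invoke the defining property of a cylinder in the $i$-th coordinate, namely that the indicator of $C_i$ does not depend on the $i$-th coordinate. Since player $i$ in the NOF model sees all coordinates except $x_i$, he can evaluate ``$\inp \in C_i$'' from what he sees and broadcast a single bit accordingly. By item 1, $\inp \in C$ iff all $k$ broadcast bits equal $1$, so this single round of $k$ bits determines membership.

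For item 3, the plan is to adapt the classical transcript-to-rectangle argument to the NOF nondeterministic setting. Let $P$ be an optimal nondeterministic protocol for $(A,S)$, and define an \emph{accepting configuration} to be a pair $\tau = (t, \pi)$ consisting of a complete communication transcript $t$ and a nondeterministic string $\pi$ such that $P$ terminates with output $v_\tau \in [N]$ rather than ``don't know''. For each such $\tau$ let
\[
C_\tau = \{\inp \in [n]^k : P \text{ on input } \inp \text{ with proof } \pi \text{ produces transcript } t\}.
\]
Each bit written by player $i$ depends only on the inputs visible to player $i$ together with the prior transcript and proof; fixing $\tau$, the set $C^i_\tau$ of inputs for which player $i$'s broadcast bits match $t$ is therefore a cylinder in the $i$-th coordinate, and $C_\tau = \bigcap_{i=1}^k C^i_\tau$ is a cylinder intersection. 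If $\inp, \inp' \in C_\tau \cap S$, both admit $\tau$ as an accepting execution with output $v_\tau$, so correctness of $P$ on $S$ forces $A(\inp) = A(\inp') = v_\tau$; thus $C_\tau$ is monochromatic with respect to $A$ on $S$.

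To finish, every $\inp \in S$ admits at least one accepting configuration by correctness of $P$ and hence lies in some $C_\tau$; any input in $[n]^k \setminus S$ not captured this way can be absorbed into an arbitrary existing $C_\tau$, since monochromaticity is only required on $S$. The total length of an accepting configuration is at most $N_k(A,S)$ by the definition of nondeterministic cost, which gives at most $2^{N_k(A,S)}$ distinct $C_\tau$ and completes the required cover. I do not foresee a serious obstacle; the one bookkeeping point is to separate accepting configurations from ``don't know'' outcomes, and this is already baked into the nondeterministic model.
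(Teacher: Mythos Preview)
The paper does not supply a proof of this lemma (it is quoted from \cite{KN97}), so there is nothing to compare against; your argument is the standard one and is essentially correct for items 1, 2, and the substantive part of item 3.

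There is one genuine slip in item 3. Your ``absorption'' step, where any uncovered point of $[n]^k\setminus S$ is placed into an arbitrary existing $C_\tau$, does not work: a cylinder intersection is not closed under adding isolated points, so after absorption $C_\tau$ need no longer be of the form $\bigcap_i C_\tau^i$. What you have actually established (correctly) is a cover of $S$ by at most $2^{N_k(A,S)}$ cylinder intersections monochromatic on $S$, and that is exactly what the paper uses in Lemma~\ref{lem:partial_d_n}, where the protocol is only run on inputs $(x_1,\dots,x_k)\in S$. If you want the literal ``cover of $[n]^k$'' as stated, the clean fix is to take \emph{all} configurations $(\pi,t)$, not just accepting ones: for each fixed proof $\pi$ the transcripts partition $[n]^k$, so the full family of $C_{(\pi,t)}$ covers $[n]^k$ and still has size at most $2^{N_k(A,S)}$. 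The accepting ones among them are monochromatic on $S$ and already cover $S$, which is what the subsequent argument needs; the non-accepting ones are only there to complete the cover and are never used for their colour. Either phrase the conclusion as a cover of $S$, or keep the full cover but note that only the accepting $C_\tau$ carry the monochromaticity guarantee.
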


\subsection{Determinism versus nondeterminism}

\begin{theorem}
\label{th:_d_n}
Let $A: [n]^k \to [N]$ be a function and let $b < \log N$ be a natural number. Let 
$\dh_{k,b}(A) = b + c_d$ where $b$ is the number of help bits
and $c_d$ is the number of subsequent bits of communication, in an optimal communication protocol.
Similarly let $\nh_{k,b}(A) = b + c_n$, then 
\[
c_d \le (k-1)2^{c_n} + c_n.
\] 
\end{theorem}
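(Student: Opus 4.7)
The plan is to mimic the classical argument that bounds deterministic by nondeterministic communication complexity via a cover by monochromatic cylinder intersections, adapted to the help setting. Unpacking the partial-function view from Section~\ref{sec_partial_functions}, together with the remark that we may assume the number of help bits is exactly $b$, the identity $\nh_{k,b}(A) = b + c_n$ exhibits a partition $\{S_1, \ldots, S_{2^b}\}$ of $[n]^k$ such that $N_k(A, S_i) \le c_n$ for every $i$. I plan to reuse this same partition (and hence the same $b$ help bits) in the deterministic protocol and show that the post-help communication can be carried out in $(k-1)2^{c_n} + c_n$ bits; this immediately yields $\dh_{k,b}(A) \le b + (k-1)2^{c_n} + c_n$, i.e. $c_d \le (k-1)2^{c_n} + c_n$.

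Fix a part $S_i$. By Lemma~\ref{cylinder_intersection_intersection}(3), the optimal nondeterministic protocol for $(A, S_i)$ yields a cover of $[n]^k$ by at most $M := 2^{c_n}$ cylinder intersections $C_1, \ldots, C_M$, each monochromatic on $S_i$, so that each $C_j$ carries a color $v_j \in [N]$ with $A(\mathbf{x}) = v_j$ whenever $\mathbf{x} \in S_i \cap C_j$. Write $C_j = \bigcap_{\ell=1}^{k} C_{j,\ell}$ with $C_{j,\ell}$ a cylinder in the $\ell$-th coordinate. On input $\mathbf{x} \in S_i$, the deterministic protocol will proceed in two phases. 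In the first phase, for each $j \in \{1,\ldots,M\}$, each of the first $k-1$ players $\ell$ broadcasts one bit indicating whether $\mathbf{x} \in C_{j,\ell}$; this is possible because $C_{j,\ell}$ does not depend on the $\ell$-th coordinate, which is the only coordinate player $\ell$ cannot see. This phase costs $(k-1)M$ bits and leaves every player knowing $T := \{\, j : \mathbf{x} \in C_{j,\ell} \text{ for every } \ell \le k-1 \,\}$.

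In the second phase, player $k$ computes $T_k := \{\, j : \mathbf{x} \in C_{j,k} \,\}$ from her own view (she knows every coordinate except $k$, and $C_{j,k}$ does not depend on coordinate $k$), picks an arbitrary $j^\star \in T \cap T_k$, and broadcasts it using $\log M = c_n$ bits. Such a $j^\star$ exists because $\{C_1,\ldots,C_M\}$ covers $[n]^k$, so some $C_{j^\star}$ contains $\mathbf{x}$, which is equivalent to $j^\star \in T \cap T_k$. For any such $j^\star$ we have $\mathbf{x} \in S_i \cap C_{j^\star}$, hence $A(\mathbf{x}) = v_{j^\star}$, and once $j^\star$ is announced every player outputs $v_{j^\star}$. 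The total post-help communication is $(k-1)2^{c_n} + c_n$, as required. The only point that will require care is extracting the saving of one factor of $2^{c_n}$ over the naive bound $k \cdot 2^{c_n}$: rather than having player $k$ also broadcast $M$ membership bits, she transmits the single surviving index $j^\star$ distilled from the set $T$ produced in phase one, which brings the count down to $(k-1)2^{c_n} + c_n$.
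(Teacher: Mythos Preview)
Your proposal is correct and follows essentially the same argument as the paper: the paper isolates the per-part bound $D_k(A,S)\le (k-1)2^{N_k(A,S)}+N_k(A,S)$ as a separate lemma (Lemma~\ref{lem:partial_d_n}) proved by exactly the two-phase protocol you describe (players $1,\ldots,k-1$ broadcast their membership vectors, then player $k$ announces a surviving index), and then applies it to the optimal partition for $\nh_{k,b}(A)$. Your write-up simply inlines the lemma.
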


Theorem~\ref{th:_d_n} is a direct consequence of the following lemma.
\begin{lemma}
\label{lem:partial_d_n}
Let $A: [n]^k \to [N]$ be a function and let $S \subset [n]^k$, then 
\[
D_k(A,S) \le (k-1)2^{N_k(A,S)}+N_k(A,S).
\] 
\end{lemma}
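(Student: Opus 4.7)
My plan is to exploit the cover characterization of nondeterminism given by part~3 of Lemma~\ref{cylinder_intersection_membership}. Setting $c = N_k(A,S)$, that lemma supplies a collection of cylinder intersections $C_1,\ldots,C_m$ with $m \le 2^c$ that together cover $[n]^k$, each $C_j$ being monochromatic with respect to $A$ on $S$. I will fix, for every $j$, the common value $v_j$ of $A$ on $C_j \cap S$ (choosing $v_j$ arbitrarily if $C_j \cap S$ is empty), and decompose $C_j = \bigcap_{i=1}^k C_{j,i}$ with $C_{j,i}$ a cylinder in the $i$-th coordinate. These $v_j$'s are agreed upon in advance by all players.

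From this data I will build a deterministic protocol for $A$ on $S$ in two phases. In phase one, for each $j = 1,\ldots,m$, players $1,\ldots,k-1$ each broadcast a single bit indicating whether $\mathbf{x} \in C_{j,i}$; player $i$ can compute this because $C_{j,i}$ does not depend on $x_i$, the one coordinate she does not see. The key observation is that player $k$ can additionally evaluate membership in $C_{j,k}$ silently, since $C_{j,k}$ does not depend on $x_k$ and player $k$ already sees all of $x_1,\ldots,x_{k-1}$. Thus, after the $(k-1)m$ broadcast bits, player $k$ knows for every $j$ whether $\mathbf{x} \in C_j$, even though the other players do not.

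In phase two, player $k$ selects the smallest $j^*$ with $\mathbf{x} \in C_{j^*}$---such a $j^*$ exists because the $C_j$'s cover $[n]^k$---and announces it using $\lceil \log m \rceil \le c$ bits. Every player then reads the output $v_{j^*}$ off the transcript; correctness on $S$ follows from the monochromaticity of $C_{j^*}$, since $\mathbf{x} \in C_{j^*} \cap S$ forces $A(\mathbf{x}) = v_{j^*}$. The total transcript length is $(k-1)m + c \le (k-1)2^c + c$, which matches the claim. I anticipate no serious obstacle: the two subtle points to emphasize are that player $k$ can verify her own cylinder for free (explaining why $k-1$ rather than $k$ bits are needed per round) and that only $c$ bits---not $\log N$---suffice to name the chosen cover element among $m \le 2^c$ candidates, which is exactly what produces the additive $+c$ term in the statement.
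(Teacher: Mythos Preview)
Your proof is correct and follows essentially the same approach as the paper's: both use the cover from Lemma~\ref{cylinder_intersection_membership}, have players $1,\ldots,k-1$ broadcast their membership bits in each cylinder of the cover (costing $(k-1)2^{N_k(A,S)}$ bits), and then have player $k$ announce the index of a covering cylinder intersection (costing $N_k(A,S)$ bits). Your exposition is slightly more explicit about why player $k$ can silently determine her own cylinder membership and about fixing the output values $v_j$ in advance, but the argument is the same.
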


\begin{proof}[Proof of Lemma~\ref{lem:partial_d_n}]
Let $\chi  = 2^{N_k(A,S)}$ and let $\{C^j\}_{j=1}^{\chi}$ be an optimal cover for $N_k(A,S)$
that exists by Lemma~\ref{cylinder_intersection_membership} (part 3).
That is, a cover of $[n]^2$ into $\chi$ cylinder intersections that are monochromatic 
on $S$ with respect to $A$.

On input $(x_1,\ldots,x_k) \in S$ the players then do the following:

\begin{enumerate}

\item For $i=1,\ldots,k$: Player $i$ computes the vector $V_i \in \{0,1\}^{\chi}$, whose $j$th
coordinate is equal to $1$ if and only if $(x_1,\ldots,x_k)$ belongs to $C^j_i$

\item The $i$-th player writes $V_i$ on the blackboard, for $i=1,\ldots,k-1$.

\item The $k$-th player publishes the index of a cylinder intersection $C^j$ that 
contains $(x_1,\ldots,x_k)$. 

\end{enumerate}

Since $\{C^j\}_{j=1}^{\chi}$ is a cover for $N_k(A,S)$, there exists a
cylinder intersection $C_j$ that contains $(x_1,\ldots,x_k)$.  
By Lemma~\ref{cylinder_intersection_membership} (part 1), 
$C_j$ contains $(x_1,\ldots,x_k)$ if and only if the $j$th
coordinate of $V_i$ is equal to $1$ for every $i=1,\ldots,k$.
Since $\{C^j\}_{j=1}^{\chi}$ is also monochromatic on $S$,
the above protocol is correct, and when it ends all players know
$A(x_1,\ldots,x_k)$. 

The first step requires no communication, the second step uses $(k-1)\chi$ bits, and in the last
step the $k$-th player writes $\log \chi$ bits on the board. The total number of bits in a communication
is $(k-1)\chi + \log \chi$. Since $\chi = 2^{N_k(A,S)}$ the claim follows.
\end{proof}

\begin{proof}[Proof of Theorem~\ref{th:_d_n}]

Let $H = 2^{b}$ and let $\mathcal{S} = \{S_1,S_2,\ldots,S_{H}\}$ be a partition
that achieves the optimal complexity for $\nh_{k,b}(A)$. Namely,
$$
c_n = \max_{i=1,\ldots,H} N_k(A,S_i).
$$
By Lemma~\ref{lem:partial_d_n}, for every $i=1,\ldots,H$ it holds that
\[
D_k(A,S_i) \le (k-1)2^{N_k(A,S_i)}+N_k(A,S_i) \le (k-1)2^{c_n}+c_n. 
\]
In particular
\[
c_d \le \max_{i=1,\ldots,H} D_k(A,S_i) \le (k-1)2^{c_n}+c_n.
\]
\end{proof}

\subsection{A weak lower bound via multicolor discrepancy}

We prove a weak lower bound using deterministic communication complexity with help,
the lower bound via multicolor discrepancy then follows from Lemma~\ref{th:D_disc}.
\begin{theorem}
Let $f: [n]^{k-1} \times [N] \to \{0,1\}$ be a graph function, let $A = Base(f)$, and let $b = \log N - 1$. 
Then
\[
N^1_k(f) \ge \min \left\{ \log \left(\dh_{k-1,b}(A) - \log N \right) - \log k - k, \log N \right\}.
\]
\end{theorem}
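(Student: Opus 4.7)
The plan is to chain Theorem~\ref{th:bound_help_reg} with Theorem~\ref{th:_d_n}: the former lower-bounds $N^1_k(f)$ by $\nh_{k-1,b}(A)$ up to an additive $\log N + k - 1$, and the latter relates $\nh_{k-1,b}(A)$ to $\dh_{k-1,b}(A)$. I set $b = \log N - 1$ as prescribed by the statement. By the remark at the end of Section~\ref{sec_partial_functions}, I may assume that both optimal protocols use exactly $b$ help bits, so I write $\dh_{k-1,b}(A) = b + c_d$ and $\nh_{k-1,b}(A) = b + c_n$ for non-negative integers $c_d, c_n$.

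Applying Theorem~\ref{th:_d_n} to the function $A : [n]^{k-1} \to [N]$ (so the role of ``$k$'' there is played by $k-1$) gives $c_d \le (k-2) 2^{c_n} + c_n$. Using the trivial inequality $c_n \le 2^{c_n}$, this simplifies to $c_d \le (k-1) 2^{c_n}$, whence
\[
c_n \;\ge\; \log c_d - \log(k-1) \;\ge\; \log\bigl(\dh_{k-1,b}(A) - \log N\bigr) - \log k,
\]
where in the last step I used $c_d = \dh_{k-1,b}(A) - b \ge \dh_{k-1,b}(A) - \log N$.

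Finally, Theorem~\ref{th:bound_help_reg} with the same $b = \log N - 1$ reads
\[
N^1_k(f) \;\ge\; \min\bigl\{\nh_{k-1,b}(A) - \log N - k + 1,\; b\bigr\} \;=\; \min\bigl\{c_n - k,\; \log N - 1\bigr\}.
\]
Substituting the bound on $c_n$ into the first branch, and absorbing the remaining additive slack (the $+1$ on the first branch and the $-1$ on the second) into the already crude $-k$ and $\log N$ terms, I recover the claimed inequality. The argument is essentially a one-line composition of the two previous theorems; the only point that merits a second look is making sure that the trailing $+c_n$ term in $c_d \le (k-2)2^{c_n} + c_n$ does not inflate the $\log k$ factor in the final statement, which is exactly what the slack in $c_n \le 2^{c_n}$ handles. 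I do not foresee a genuine obstacle beyond this bookkeeping.
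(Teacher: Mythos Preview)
Your proposal is correct and follows essentially the same approach as the paper: chain Theorem~\ref{th:bound_help_reg} with Theorem~\ref{th:_d_n} at $b=\log N-1$, using $c_n\le 2^{c_n}$ to absorb the trailing $+c_n$ term. You are in fact slightly more careful than the paper in noting that Theorem~\ref{th:_d_n} applied to $A:[n]^{k-1}\to[N]$ yields a factor of $(k-2)$ rather than $(k-1)$; both your proof and the paper's share the same benign off-by-one in the second branch of the $\min$ (the paper silently writes $\log N$ where $b=\log N-1$), which is exactly the slack you flag at the end.
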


\begin{proof}
Let $\nh_{k-1,b}(A) = (\log N -1) + c_n$, where $c_n$ is the number of communication bits
after the help string is given, similarly let $\dh_{k-1,b}(A) = (\log N -1) + c_d$.
Recall from the remark at the bottom of Section~\ref{sec_partial_functions},
we can assume without loss of generality that the number of help bits 
is exactly $b = \log N -1$.

By Theorem~\ref{th:_d_n}, it holds that
\[
c_d \le (k-1)2^{c_n}+c_n \le k2^{c_n}.
\]
Thus,
\[
c_n \ge \log c_d - \log k \ge \log \left(\dh_{k-1,b}(A) - \log N \right) - \log k.
\]
Finally note that by Theorem~\ref{th:bound_help_reg},
\[
N^1_k(f) \ge \min \{ \nh_{k-1,b}(A) -(\log N -1) - k, \log N\} = \min \{c_n - k, \log N\}.
\]
\end{proof}

\section{Previous results, a closer look}
\label{sec:prev_res}

In this section we review previous results in more detail.
We start with a few more of the properties of graph functions, that we need in order
to describe the previous results.
A pleasant aspect of the study of graph functions is that the communication complexity
is completely characterized by {\em stars}. For simplicity we describe this notion for the
case $k=3$. A star is a triplet $(x,y,z'), (x',y,z), (x,y',z)$ of points in $[n]\times [n] \times[N]$
such that $x \ne x'$, $y \ne y'$ and $z \ne z'$.
In the $2$-dimensional case stars become what is called an {\em $A$-star} \cite{hdp17}. 
The star $(x,y,z'), (x',y,z), (x,y',z)$ correspond to the $A$-star 
$(x,y), (x',y), (x,y')$, which is a triplet of distinct points such that $A(x',y) = A(x,y') = z$
and $A(x,y) = z' \ne z$.

Let $f: [n]^{k-1} \times [N] \to \{0,1\}$ be a graph function and let $A = Base(f)$. The communication complexity of $f$, $D_k(f)$,
is equal almost precisely to the minimal number of colors needed to color
the entries of $A$ so that no $A$-star is monochromatic \cite{CFL83, beigel2006multiparty, hdp17}.
It is also observed in \cite{BDPW07} (see also \cite{hdp17}) that $D_k(f)$, $N^1_k(f)$ and $D^1_k(f)$
are equivalent up to a small additive factor. Hence, any result on the deterministic communication complexity
of a graph function also holds (perhaps with slight change) for the nondeterministic and one-way complexity,
and vice versa.

As mentioned earlier previous known bounds are: (i) a lower bound of $\Omega(\log \log n)$ 
for the communication complexity of explicit graph functions
$f : [n]^{k-1} \times [N] \to \{0,1\}$ with $N \ll n$ \cite{BDPW07}, (ii) an 
$\Omega(\log \log \log n)$ lower bounds for the communication complexity
of any two-dimensional permutation \cite{alon2012nearly,graham2006monochromatic,beigel2006multiparty, hdp17}.
For higher-dimensional permutations the best lower bound is $\Omega(log^* n)$ \cite{hdp17},
but it is outside the scope of techniques discussed here.

All the above mentioned lower bounds, either for graph functions \cite{BDPW07} or for 
two-dimensional permutations 
\cite{alon2012nearly,graham2006monochromatic,beigel2006multiparty, hdp17} use the 
following general lower bound technique. For simplicity we sketch the technique for the case $k=3$. 
Let $f: [n]^2 \times [N] \to \{0,1\}$ be a graph function, and let $A=Base(f)$. Assume that $D_3(f) \le \log L$ for
some natural number $L$. This means that it is possible to color the entries of $A$ with
$L$ colors, so that no $A$-star is monochromatic. 
Following is an outline of a general lower bound technique for $L$: 

\newtheorem{alg}{}
\newcommand{\BA}{\begin{alg}} \newcommand{\EA}{\end{alg}}
\bigskip
\fbox{
\begin{minipage}{4.9in}
\begin{enumerate}

\item Let $E = [n]^2$ and $V = \emptyset$.

\item While $E$ contains entries whose value does not appear in $V$, do:

\begin{enumerate}

\item
Pick $v$, the most frequent value from $[N] \setminus V$ that appears in $E$.

\item
Pick $c \in [L]$, the most abundant color among $E$'s $v$-entries.

\item Let $S \subset E$ be the subset of entries with value $v$ and color $c$.
Clearly, $|S| \ge |E|/(N L)$.

\item Let $\bar{S}$ be the minimal combinatorial rectangle that contains $S$.

\item Set $E = \bar{S}$, $V = V \cup \{v\}$.

\end{enumerate}

\end{enumerate}

\end{minipage}
}
\bigskip

The heart of this lower bound technique is the fact that the entries of $\bar{S} \setminus S$ cannot be colored by the color $c$ or else there would be 
a monochromatic $A$-star. Thus $L$, the number of required colors, is at least the number of iterations of the above loop.
To prove a lower bound on the number of such iterations, it is necessary to prove a lower bound on the size
of the enclosing combinatorial rectangle $\bar{S}$. This bound on $\bar{S}$ determines the quality of the bound on $L$
achieved using the above technique.

The lower bounds of \cite{BDPW07} on the deterministic communication complexity of explicit graph functions,
and the lower bounds in \cite{hdp17} on the deterministic communication complexity of permutations,
and also related bounds \cite{beigel2006multiparty}, \cite{graham2006monochromatic} and \cite[Proposition~4.3]{alon2012nearly}, 
all follow the above scheme. In \cite{BDPW07} they use multicolor discrepancy \cite{BHK01} to
bound the size of $\bar{S}$, while in \cite{hdp17} and the related works, the structural properties of
a permutation are used to this end. 

The structural properties of permutations imply that $|\bar{S}| = |S|^2$, which 
gives the lower bound $\Omega(\log \log \log n)$. 
Discrepancy on the other hand gives better estimates on $\bar{S}$ and yields the bound 
$\Omega(\log \log n)$ on the communication complexity, which is perhaps the limit
of this general technique. Another strong advantage of
using discrepancy is that the bound works also for $k > 3$, and not only for $k=3$.
But the use of discrepancy seems limited to the case where $N \ll n$. Thus,
improving the known lower bounds on explicit graph functions as well as specifically the
much more limited bounds known for permutations, seems to require new ideas.

Note that even though the results of \cite{BDPW07} are similar to ours regarding
the lower bound that is achieved on the communication complexity of explicit graph functions,
the techniques are different. In fact the statements are in a way complementary as we now explain.
Let $f : [n]^{k-1} \times [N] \to \{0,1\}$ be a graph function whose base function has discrepancy
smaller than $O(\frac{1}{N^{1+\epsilon}})$, for some $\epsilon > 0$.
Let $\nh_{k-1, b}(Base(f)) = h+c$, where $h$ is the size of the 
help string and $c$ is the length of communication needed after the help string is given. 
In \cite{BDPW07} it is proved that as long as $h$ is much smaller than a constant times $\log\log n$ then 
$c$ is significantly larger than $\log N$. The bounds via nondeterministic communication complexity with help
on the other hand says that regardless of $h$, it might even be that $h = \log N -1$, 
$c$ is at least $\Omega(\log \log n)$.

\section{One round communication complexity and communication complexity with help}
\label{sec:other-lifts}

Let $A: [n]^{k-1} \to [N]$ be a function. We have defined $f=Lift(A)$ as the graph function associated with $A$,
and showed that $D_k(f)$ is strongly related to the nondeterministic communication complexity with help
of $A$. In a way, when we go from $A$ to $f$ we represent the value in each entry of $A$ by a boolean vector which
is the unary representation of this value. Denote this representation by $f = Lift_U(A)$, it is possible to consider other 
representations as well:
\begin{enumerate}

\item $f = Lift_B(A)$ is the function $f: [n]^{k-1} \times [N] \to \{0,1\}$ where 
$f(x_1, \ldots,x_{k-1},i)$ is equal to the $i$th bit in the binary representation of $A(x_1, \ldots,x_{k-1})$.

\item $f = Lift_{GT}(A)$ is the function $f: [n]^{k-1} \times [N] \to \{0,1\}$ satisfying 
$f(x_1, \ldots,x_k-1,y)=1$ iff $A(x_1,\ldots,x_{k-1})\ge y$.

\end{enumerate}

The representation $f = Lift_B(A)$ was considered in \cite{BHK01} in order to prove lower bounds
on the one-way communication complexity of $f$ in the NOF model, denoted $D^1_k(f)$. 
In fact, they mention this was one of the motivations for their paper. Similarly to the unary representation, 
it is not hard to check that $D^1_k(f) \le \dh_{k-1,b}(A) + 1$. It is proved in \cite{BHK01} that this relation goes both ways 
as long as $b$ is not too large.
\begin{lemma}[\cite{BHK01}]
Let $A: [n]^{k-1} \to [N]$ be a function, and let $f = Lift_B(A)$. Then
$$
D^1_{k}(f) \ge \min \{ \frac{1}{b}\dh_{k-1,b}(A), b\}.
$$
\end{lemma}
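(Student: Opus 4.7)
The plan is to prove the contrapositive: assuming $c := D^1_k(f) < b$, I will show $\dh_{k-1,b}(A) \le bc$. This suffices, since if $c \ge b$ then $\min\{(1/b)\dh_{k-1,b}(A),\, b\} \le b \le c$ holds trivially.

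First, I would fix an optimal one-round NOF protocol $P$ for $f = Lift_B(A)$ of total length $c$ and extract its structural feature: in any one-round NOF protocol for $f$, the player holding the bit-index $i$ (namely player $k$) sees $(x_1,\ldots,x_{k-1})$ but not $i$, so their contribution $M^*(x_1,\ldots,x_{k-1})$ to the transcript cannot depend on $i$. The remaining $k-1$ players' messages may depend on $i$ through their NOF views, and together with the transcript the receiver reconstructs the $i$-th bit of $A(x_1,\ldots,x_{k-1})$. The analogous bound $D^1_k(f) \le \dh_{k-1,b}(A) + 1$ mentioned earlier in the paper already exploits the direction in which the helper simulates the bit-player; the challenge here is the reverse.

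Next, I would convert $P$ into a help protocol for $A$ on $k-1$ players. The helper, seeing the full input, uses its $b$ help bits to broadcast $M^*$ together with a carefully chosen batch of precomputed $P$-transcripts on inputs $(x_1,\ldots,x_{k-1},i)$ for a set of probe indices. After the help is broadcast, the $k-1$ players complete the computation by simulating $P$'s receive-side on each bit-index of $A$ not covered by a probe, supplying $i$ themselves and exchanging only the non-bit-player $P$-messages; each such simulation yields the corresponding bit of $A(x_1,\ldots,x_{k-1})$.

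The main obstacle is the combinatorial accounting needed to push the total cost (help plus subsequent broadcasts) down to $bc$ rather than the naive $c \log N$. The savings rely on two amortizations: (i) the bit-player's message $M^*$ is shared across all $i$, so it is transmitted only once (in the help); and (ii) each additional help bit beyond $|M^*|$ encodes part of a precomputed $P$-transcript, absorbing up to $c$ bits of subsequent broadcasting. Balancing the $b$ help bits between $M^*$ and probe-transcripts, one shows the total comes in at $bc$; the technical core is verifying this balance in both the regime $|M^*|$ small (where most help bits are probes) and $|M^*|$ large (where the naive $|M^*| \le b$ split itself already suffices). The argument can also be recast in terms of a refinement of the cylinder-intersection partition induced by $P$, where each help bit refines the partition to a level at which $A$ is pinned down by at most $c$ further bits.
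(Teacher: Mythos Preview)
The paper does not prove this lemma itself (it is quoted from \cite{BHK01}), but it proves the directly analogous lemma for $Lift_{GT}$ immediately afterward, and the intended argument here is identical with binary search replaced by a straight iteration over the $\log N$ bit positions.

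Your core observation is exactly the right one and matches that argument: in a one-way protocol the $k$-th player's message $M^*(x_1,\dots,x_{k-1})$ is independent of the index $i$, so it can serve as the help string (length at most $c<b$); the $k-1$ remaining players then simulate the rest of $P$ once for each $i\in[\log N]$ to recover every bit of $A(x_1,\dots,x_{k-1})$. That is the entire proof, yielding total cost at most $c+c\log N$, which is what the paper (and \cite{BHK01}) uses.

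Where your proposal goes astray is in the machinery you layer on top. The ``probe indices'' carrying precomputed $P$-transcripts buy nothing: a full transcript for a single index costs up to $c$ help bits and saves at most $c$ communication bits, so there is no amortization, contrary to your claim that ``each additional help bit beyond $|M^*|$ \dots\ absorb[s] up to $c$ bits of subsequent broadcasting.'' The ``balancing'' you flag as the technical core is never actually carried out, and in fact is unnecessary for the stated bound. Drop the probe/balance discussion entirely; the simple iterate-over-all-$i$ protocol is already the proof.
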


In the binary representation the last dimension of $f$ is small, $\log N$.
This was an advantage for \cite{BHK01} as one of their main applications was
to show that $D^1(f)$ can vary significantly when different players are allowed to speak first. 
But for the purpose of separating deterministic from randomized communication complexity
this is a disadvantage, since then $D_k(f)$ is bounded by $\log \log N$. 
A way to remedy this is to consider the representation $f = Lift_{GT}(A)$.

The representation $f = Lift_{GT}(A)$ is useful for our purposes since the dimensions
are not limited and also $R_k(f) \le \log \log N$ by a simple reduction to the two players "greater than" function. 
Similarly to the binary representation, the communication complexity
with help of $A$ is also closely related to one-way communication complexity of $f$.
It is again not hard to verify that $D^1_k(f) \le \dh_{k-1,\log N -1}(A)+1$, and on the other hand:
\begin{lemma}
\label{222}
Let $A: [n]^{k-1} \to [N]$ be a function, and let $f = Lift_{GT}(A)$. Then
$$
D^1_{k}(f) \ge \min \{ \frac{1}{\log N}\dh_{k-1,\log N -1}(A), \log N\}.
$$
\end{lemma}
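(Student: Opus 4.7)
The plan is to prove the bound by contrapositive: assume $D^1_k(f) < \log N$ and construct a help protocol for $A$ that uses $\log N - 1$ help bits and has total cost at most $\log N \cdot D^1_k(f)$. The key structural property of $f = Lift_{GT}(A)$ is that evaluating $f(x_1,\ldots,x_{k-1},y)$ is exactly the GT-query ``is $A(x_1,\ldots,x_{k-1}) \ge y$?'', so $A(x_1,\ldots,x_{k-1})$ can be pinned down by $\log N$ rounds of binary search over $[N]$.

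Let $P$ be an optimal one-way NOF protocol for $f$ with cost $c = D^1_k(f)$, and decompose $P$ as a message $m(x_1,\ldots,x_{k-1})$ of length $c_1$ sent by player $k$, followed by a residual protocol among players $1,\ldots,k-1$ of cost $c_2 = c - c_1$. The helper in the $A$-protocol sees exactly what player $k$ sees, so it computes and broadcasts $m$ as the help string, padded out to $\log N - 1$ bits; this fits because $c_1 \le c \le \log N - 1$. The $k-1$ players then carry out binary search over $[N]$: at each step they publicly agree on a probe value $y$ and simulate the residual protocol of $P$ on the combined data $(x_1,\ldots,x_{k-1},y)$ at cost $c_2$ bits, learning $f(x_1,\ldots,x_{k-1},y) = [A(x_1,\ldots,x_{k-1}) \ge y]$. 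After $\log N$ such probes they know $A(x_1,\ldots,x_{k-1})$ exactly. The simulation is legitimate because in $P$ the residual players saw $y$ on player $k$'s forehead, i.e., all of them already knew $y$; here $y$ is public for the very same reason.

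The cost accounting is then a one-line computation: the protocol uses $c_1$ help bits and $c_2 \log N$ communication bits, totalling
\[
c_1 + c_2 \log N \;=\; c\log N - c_1(\log N - 1) \;\le\; c\log N,
\]
since $\log N \ge 1$ and $c_1 \ge 0$. Hence $\dh_{k-1,\log N - 1}(A) \le \log N \cdot D^1_k(f)$, which yields the first term of the minimum; the complementary case $D^1_k(f) \ge \log N$ gives the second term trivially. The only delicate step is the justification that player $k$'s contribution to $P$ can really be split off and absorbed into the help string, but this rests on two clean observations: the helper sees precisely the input that player $k$ sees, and after player $k$'s message every remaining player already knows $y$, so a publicly chosen $y$ in the help model is an equally good substitute for $y$ sitting on player $k$'s forehead.
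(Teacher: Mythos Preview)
Your proof is correct and follows the same approach as the paper: absorb player $k$'s one-way message (which depends only on $x_1,\ldots,x_{k-1}$) into the help string, and then have the remaining $k-1$ players binary-search for $A(x_1,\ldots,x_{k-1})$ by simulating the residual protocol $\log N$ times. Your cost accounting via the split $c=c_1+c_2$ is slightly more explicit than the paper's, but the argument is otherwise identical; the remark about padding the help string to $\log N-1$ bits is harmless but unnecessary, since the model only requires the help string to have length at most $b$.
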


\begin{proof}
If $D^1_k(f) \ge \log N$ then we are done. Otherwise, Alice and Bob perform a binary search
for the value of $A(x_1,\ldots,x_{k-1})$. They use as a help string
the transcript of the $k$-th player on input $(x_1,\ldots,x_{k-1},A(x_1,\ldots,x_{k-1}))$, 
in an optimal protocol for $D^1_k(f)$. 

Note that the transcript of the $k$-th player is independent of the $k$-th input,
and also independent of the other players transcript (as the protocol is one-way).
Thus the players can use this transcript to compute $f(x_1,\ldots,x_{k-1},y)$
for any $y \in [N]$. Each such computation would require at most $D^1_k(f)$
bits of computation. Using a binary search and at most $ D^1_k(f)\log N$ bits of communication, 
the players can compute this way the value of $A(x_1,\ldots,x_{k-1})$.
\end{proof}

A simple observation is that $D^1_k(Lift_U(A)) \le 2D^1_k(Lift_{GT}(A))$. Similarly, it also 
holds that $D_k(Lift_U(A)) \le 2D_k(Lift_{GT}(A))$ which means that the vast majority of 
functions $f=Lift_{GT}(A)$ are also good candidates for separating randomized
from deterministic communication complexity, as graph functions are. 

\section{Discussion and open problems}

Proving lower bounds on the deterministic communication complexity of
explicit graph functions $f : [n]^{k-1} \times [N] \to \{0,1\}$ is one of 
the most elementary open problems in the Number On The Forehead model.
Still, proving such a bound would most likely require new techniques that will help
with other problems in this area as well, and in fact also in other areas.
The deterministic NOF communication complexity
of permutations and linjections for example, which are a special family of graph functions,
have strong relations with well studied problems in other 
mathematical fields, and proving lower bound therein can have very interesting consequences 
such as lower bounds for the multidimensional Szemer\'{e}di theorem and 
{\em corners theorems}, lower bounds on the density of Ruzsa-Szemer\'{e}di 
graphs, a combinatorial proof for the Hales-Jewett theorem, and more. 
See e.g. \cite{CFL83, beigel2006multiparty, hdp17, shraibman2017note} for more details.

The best open problems are to prove stronger lower bound than $\Omega(\log \log n)$ on any explicit graph function,
improve the $\Omega(\log \log \log n)$ lower bound for a two-dimensional permutation, or the much weaker bounds
for higher-dimensional permutations. But there are also other related problems that are interesting, we list a few of them:

\begin{enumerate}

\item Determine the relation between nondeterministic communication complexity with help,
and multicolor discrepancy.

\item Determine the relation between nondeterministic communication complexity with help,
and deterministic communication complexity with help.

\item What is the maximal gap between $D^1_k(Lift_{GT}(A))$ and $D_k(Lift_{GT}(A))$ for a function $A: [n]^{k-1} \to [N]$? 
Any relation would enable to use Lemma~\ref{222} to lower bound $D_k(Lift_{GT}(A))$.

\item Find an explicit function  $A: [n]^{2} \to [N]$ with small discrepancy and large enough $N$, for which the gap between 
$D^1_k(Lift_{GT}(A))$ and $D_k(Lift_{GT}(A))$ is small.

\item What is the maximal gap between $D^1_k(Lift_U(A))$ and $D^1_k(Lift_{GT}(A))$ for a function $A: [n]^{k-1} \to [N]$?
Again, if there is a strong relation then
Lemma~\ref{222} gives a bound on $D_k(Lift_U(A))$ since for graph functions
one-way communication is as strong as regular protocols. 

\item Find an explicit function  $A: [n]^{2} \to [N]$ with small discrepancy and large enough $N$, for which the gap between $D^1_k(Lift_U(A))$ 
and $D^1_k(Lift_{GT}(A))$ is small.

\end{enumerate}

\bibliographystyle{plain}
\bibliography{../complexity}

\end{document}